\documentclass[12pt, draftclsnofoot, onecolumn]{IEEEtran}
%\documentclass[journal]{IEEEtran}
%\documentclass[conference]{IEEEtran}

%\IEEEoverridecommandlockouts

\usepackage{amsthm}
\usepackage[cmex10]{amsmath}
\usepackage{graphicx}
\usepackage{color}
\usepackage[tight,footnotesize]{subfigure}
\usepackage{amsfonts}
\usepackage{algorithmic}
\usepackage{algorithm}
\usepackage{graphicx}
\usepackage{subfigure}
\usepackage{bm}
\usepackage{times,color,amssymb,epsfig,psfrag,stfloats,enumerate}
\usepackage{cite}
\usepackage{flushend}

\newtheorem{theorem}{Theorem}
\newtheorem{remark}{Remark}
\newtheorem{lemma}{Lemma}

\newtheorem{proposition}{Proposition}

\ifCLASSINFOpdf
  % \usepackage[pdftex]{graphicx}
  % declare the path(s) where your graphic files are
  % \graphicspath{{../pdf/}{../jpeg/}}
  % and their extensions so you won't have to specify these with
  % every instance of \includegraphics
  % \DeclareGraphicsExtensions{.pdf,.jpeg,.png}
\else
  % or other class option (dvipsone, dvipdf, if not using dvips). graphicx
  % will default to the driver specified in the system graphics.cfg if no
  % driver is specified.
  % \usepackage[dvips]{graphicx}
  % declare the path(s) where your graphic files are
  % \graphicspath{{../eps/}}
  % and their extensions so you won't have to specify these with
  % every instance of \includegraphics
  % \DeclareGraphicsExtensions{.eps}
\fi
% graphicx was written by David Carlisle and Sebastian Rahtz. It is
% required if you want graphics, photos, etc. graphicx.sty is already
% installed on most LaTeX systems. The latest version and documentation can
% be obtained at:
% http://www.ctan.org/tex-archive/macros/latex/required/graphics/
% Another good source of documentation is "Using Imported Graphics in
% LaTeX2e" by Keith Reckdahl which can be found as epslatex.ps or
% epslatex.pdf at: http://www.ctan.org/tex-archive/info/
%
% latex, and pdflatex in dvi mode, support graphics in encapsulated
% postscript (.eps) format. pdflatex in pdf mode supports graphics
% in .pdf, .jpeg, .png and .mps (metapost) formats. Users should ensure
% that all non-photo figures use a vector format (.eps, .pdf, .mps) and
% not a bitmapped formats (.jpeg, .png). IEEE frowns on bitmapped formats
% which can result in "jaggedy"/blurry rendering of lines and letters as
% well as large increases in file sizes.
%
% You can find documentation about the pdfTeX application at:
% http://www.tug.org/applications/pdftex

% correct bad hyphenation here
\hyphenation{op-tical net-works semi-conduc-tor}

\begin{document}
%
% paper title
% can use linebreaks \\ within to get better formatting as desired
\title{Flight Time Minimization of UAV for Data Collection over Wireless Sensor Networks}

\author{Jie~Gong,~\IEEEmembership{Member,~IEEE}, Tsung-Hui Chang,~\IEEEmembership{Senior Member,~IEEE}, Chao Shen,~\IEEEmembership{Member,~IEEE}, Xiang~Chen,~\IEEEmembership{Member,~IEEE}

\thanks{J. Gong is with School of Data and Computer Science, Sun Yat-sen University, Guangzhou 510006, China. Email: gongj26@mail.sysu.edu.cn.}
\thanks{T.-H. Chang is with School of Science and Engineering, The Chinese University of Hong Kong, Shenzhen, Shenzhen 518172, China. Email: tsunghui.chang@ieee.org}
\thanks{C. Shen is with State Key Lab of Rail Traffic Control and Safety, Beijing Jiaotong University, Beijing, China. Email: chaoshen@bjtu.edu.cn}
\thanks{X. Chen is with the School of Electronics and Information Engineering,  Sun Yat-sen University, Guangzhou 510006, China, SYSU-CMU Shunde International Joint Research Institute and Key Lab of EDA, Research Institute of Tsinghua University in Shenzhen, Shenzhen 518057, China. Email: chenxiang@mail.sysu.edu.cn.}
%\thanks{The work is supported in part by the Fundamental Research Funds for the Central Universities, the National Natural Science Foundation of China (No.~61501527), EU's Horizon 2020 Research and Innovation Staff Exchange programme (No.~734325, TESTBED project), MOE-CMCC Joint Research Fund of China (No.~MCM20160104), Shenzhen Basic Research Foundation (No.~JCYJ20150630153033410), SYSU-CMU Shunde International Joint Research Institute, 2016 Major Project of Collaborative Innovation in Guangzhou, ``Prototype Development of Gateway and Terminal for 4G Satellite and Terrestrial Integration System", and 2016 Applied R\&D Key Project in Guangdong Province, ``R\&D and Application of Tailess Field Law Enforcement Recorder".}
}

% make the title area
\maketitle

\begin{abstract}
%\boldmath
In this paper, we consider a scenario where an unmanned aerial vehicle (UAV) collects data from a set of sensors on a straight line. The UAV can either cruise or hover while communicating with the sensors. The objective is to minimize the UAV's total flight time from a starting point to a destination while allowing each sensor to successfully upload a certain amount of data using a given amount of energy. The whole trajectory is divided into non-overlapping data collection intervals, in each of which one sensor is served by the UAV. The data collection intervals, the UAV's speed and the sensors' transmit powers are jointly optimized. The formulated flight time minimization problem is difficult to solve. We first show that when only one sensor is present, the sensor's transmit power follows a water-filling policy and the UAV's speed can be found efficiently by bisection search. Then, we show that for the general case with multiple sensors, the flight time minimization problem can be equivalently reformulated as a dynamic programming (DP) problem. The subproblem involved in each stage of the DP reduces to handle the case with only one sensor node. Numerical results present insightful behaviors of the UAV and the sensors. Specifically, it is observed that the UAV's optimal speed is proportional to the given energy of the sensors and the inter-sensor distance, but inversely proportional to the data upload requirement.
\end{abstract}

\IEEEpeerreviewmaketitle

\renewcommand{\algorithmicrequire}{\textbf{Input:}}
\renewcommand{\algorithmicensure}{\textbf{Output:}}

\section{Introduction}
Recently, wireless communication with unmanned aerial vehicles (UAVs) \cite{zeng2016wireless} has been considered as a promising technology to expand network coverage and enhance system throughput, by leveraging the UAVs' high mobility \cite{Dji2017inspire} and line-of-sight (LOS) dominated air-ground channels \cite{sun2015dual}. One of the key applications is wide-area data collection in wireless sensor networks \cite{Dong2014uav}. Conventionally, each sensor node delivers their monitored data to a fusion center via multi-hop transmissions. Hence, a sensor node requires to not only transmit its own data, but also relay the others'. As a consequence, the sensors' battery may drain quickly and the multi-hop network connection may be lost. By using the UAVs as mobile fusion centers, every sensor node can directly send observed data to a UAV. In addition, the LOS channel condition results in higher data rate for ground-to-air transmissions compared with ground-to-ground transmissions. However, as UAVs are energy constrained due to the limited on-board battery, it is paramount to shorten the flight time needed for a data collection mission.

Different from the conventional communication techniques, there is a trajectory optimization issue for UAV-aided wireless communications. To improve network connectivity, UAVs' deployment and movement were optimized to track the network topology in \cite{han2009opt}. In \cite{Grotli2012path}, offline path planning of UAVs was addressed for collision avoidance and fuel efficiency. Joint UAV deployment and trajectory optimization problem was solved in \cite{erdem2017deploy} with a quantization theory approach, and joint trajectory and communication power control for multiple UAVs was studied in \cite{we2017joint}. In addition, UAVs are widely used as mobile relays. Reference \cite{zeng2016throughput} studied the throughput maximization problem for a UAV relay and showed that the uplink power of users should follow a ``staircase" water filling structure. In \cite{kalogerias2016mobile}, joint optimization of multi-UAV beamforming and relay positions for throughput maximization was studied based on stochastic optimization techniques. A round trip ``load-carry-and-deliver" protocol was tested and evaluated by experiments in \cite{cheng2007maximizing}. Besides serving as relays, UAVs can also be used as mobile base stations (BSs) for emergent communications. In \cite{lyu2017placement} and \cite{alzenad20173d}, BS placement was optimized in the two-dimensional (2D) space and three-dimensional (3D) space, respectively, to minimize the required number of mobile BSs while maximizing their coverage. The coverage of UAVs as mobile BSs was analytically studied in \cite{moza2016efficient} considering inter-UAV interference and beamwidth design.

In addition, there has been a growing research interest in applying UAV for data collection and dissemination in wireless sensor networks. The aerial link characterization based on practical protocols and experiments was given in \cite{ahmed2016on}. Reference \cite{jiang2012optimization} considered data collection via uplink transmission, and proposed to mitigate multi-sensor interference by adjusting the UAV heading and beamforming. Adaptive modulation strategy was adopted in \cite{Adbulla2014optimal} to improve energy efficiency of sensor nodes while guaranteeing user fairness. To avoid contention due to simultaneous data transmissions from multiple sensor nodes, a priority-based frame selection scheme was proposed in \cite{say2016priority}. In \cite{zhan2017energy}, wake-up and sleep adaptation was applied for sensor nodes and UAV's trajectory optimization was jointly considered to minimize the sensors' energy consumption. One dimensional information dissemination problem is considered in \cite{lyu2016cyclical}, where the sensors are served in cyclical TDMA mode, and the service regions for all sensors are optimized.

It is worthwhile to note that most of the existing works mentioned above focus on enhancing energy efficiency or spectrum efficiency of sensor nodes, but overlook the fact that the limited energy of UAVs is one of the fundamental bottlenecks in UAV-aided wireless networks. As a matter of fact, the dominant energy consumption of a UAV lies in the propulsion control system that accelerates the UAV and maintains its flight height. In \cite{franco2015energy}, a UAV's energy consumption was modeled as a function of flight speed and operation conditions such as climbing, hovering, and so on. A UAV trajectory optimization problem with detailed propulsion energy consumption considering both velocity and acceleration was studied in \cite{zeng2017energy}. However, as the UAV's energy consumption model is quite complex, the problems are difficult to be optimally solved. {Intuitively, under a certain constraint on the flight speed, flight time minimization is an alternative for energy consumption minimization \cite{franco2015energy}. It arises in practical applications such as mission completion time minimization \cite{zeng2018traj}.} { Two dual problems: data delivering maximization under a maximum flight time constraint, and flight time minimization under given load requirement were solved in \cite{moza2017wireless} based on optimal transport theory. In \cite{moza2018drone}, drone-based antenna array approach with multiple UAVs was proposed to significantly reduce the service time for ground users.} In cellular networks, the same problem subject to a link quality constraint between ground base station (GBS) and UAV was studied in \cite{zhang2017cellular}. However, in wireless sensor networks, there is still a lack of research efforts to consider both the UAV and sensors' energy consumption as well as the quality of service of the sensor nodes at the same time.

%On the other hand, some existing works consider the UAV's energy consumption for communication tasks. Energy efficient cooperative transmission based on UAVs was studied in \cite{kan2014aerial} and \cite{li2016ee}, where only the energy consumption of transceiver module is considered.

In this paper, we study a flight time minimization problem for a UAV which collects data from a set of energy constrained ground sensors. Each of the sensors wants to upload a certain amount of data to the UAV. The UAV can collect data either during flying or hovering. We assume that the sensors are located on a line and the UAV's trajectory is divided into non-overlapping data collection intervals, each of which is dedicated to data collection from one sensor node. The objective is to minimize the total flight time of the UAV from an initial point to a destination by jointly optimizing the division of intervals, the UAV's speed, as well as the sensors' transmission power. The contributions of this paper are summarized as follows.

\begin{itemize}
\item The formulated UAV flight time minimization problem is intrinsically difficult. We first consider the single-sensor scenario. While the problem is still difficult when only one sensor node is present, we reveal some insightful structures for the optimal solution. Specifically, we present an explicit condition on the feasibility of the problem. When the problem is feasible and if the data collection interval is given, we show that the optimal power allocation of the sensor follows a water-filling solution and the optimal speed can be efficiently obtained via bisection search. The data collection interval can be numerically found via a two-dimensional search.
\item The algorithm for solving the single-sensor case can be extended for solving the general scenario with multiple sensor nodes. In particular, by judiciously exploiting the problem structure, we show that the flight time minimization problem with multiple sensors can be equivalently formulated as a dynamic programming (DP) problem. In each stage of the DP, the optimal data collection interval for one sensor node is searched and the algorithm for the single-sensor case is used for finding the UAV's optimal speed and sensor's transmission power.
\item Numerical results illustrate the optimal behaviors of the UAV and the sensor nodes under different scenarios. In particular, the UAV's optimal speed is proportional to the sensors' energy budgets and the inter-sensor distance, but inversely proportional to the amount of data to upload. For the randomly distributed sensors with random amount of data and energy, the average minimum flight time increases with the average amount of data and decreases with the average amount of available energy.
\end{itemize}

The rest of the paper is organized as follows. Section \ref{sec:model} presents the system model and the problem formulation. Section \ref{sec:single} studies the single-sensor case. Then, the multi-sensor problem is solved in Section \ref{sec:multi}. Simulations are shown in Section \ref{sec:sim}. Finally, Section \ref{sec:concl} concludes the paper.

\section{System Model and Problem Formulation} \label{sec:model}
As shown in Fig. 1, we consider a scenario where a UAV flies over a set of $N$ sensors for data collection. The sensors are located on a line, labeled by $S_1, S_2, \cdots, S_N$. { The line model is motivated by the application scenarios such as a sensor network deployed along a highway, railway or power line.} Each sensor $n$ needs to upload $B_n$ information bits and is subject to a total energy budget $E_n$, where $n = 1, 2, \cdots, N$. The UAV flies at a fixed height $H$ from an initial point $S_0$ to a destination $S_{N+1}$, and applies time division protocol to sequentially receive the uplink data from the sensors. Specifically, the whole range $[S_0, S_{N+1}]$ is divided into $N$ non-overlapping intervals $[x_n, y_n], n = 1, 2, \cdots, N$ satisfying $S_0 \le x_1 \le y_1 \le x_2 \le y_2 \le \cdots \le x_N \le y_N \le S_{N+1}$. Each sensor node $n$ uploads its data when the UAV flies in the interval $[x_n, y_n]$. If $x_n = y_n$, the UAV hovers above the location $x_n$ and receives the data from sensor $n$. Otherwise, we assume the UAV flies with a constant speed $0 < v_n \le v_{\max}$ from $x_n$ to $y_n$ and receives the data during  flying. As no sensor uploads data in the interval $(y_n, x_{n+1})$, the UAV flies with the maximum speed $v_{\max}$ in order to minimize the total flight time. In this paper, the UAV's acceleration/deceleration process is ignored for analytical tractability.

\begin{figure}[th]
\centering
\includegraphics[width=4.5in]{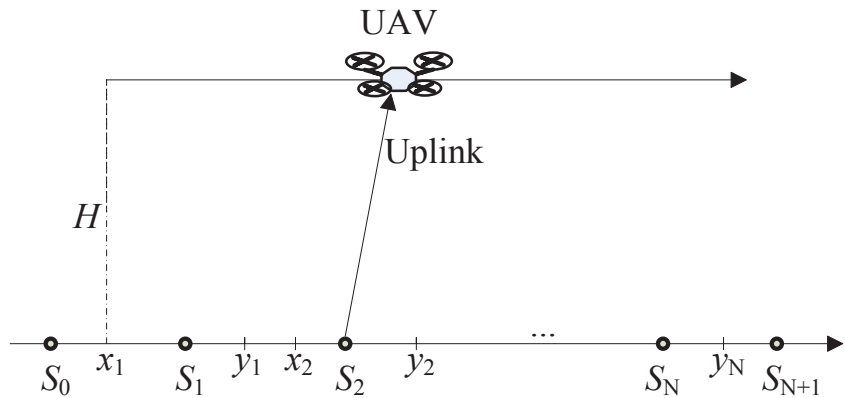}
\caption{Data collection by a UAV from ground sensors along a line.} \label{fig:systemM}
\end{figure}

\subsection{Data Collection Modes}
Since the UAV can receive data when either flying or hovering, we respectively consider the data collection models for the two cases.

\subsubsection{Data Collection during Flying} \label{sec:aviation}
If $v_n > 0$ and $x_n < y_n$, the UAV collects data from the ground node $S_n$ during $[x_n, y_n]$. The flight time or the data collection time is $t_n = (y_n-x_n)/v_n$. As the transmission distance changes during flying, the transmit power and the data rate should also adapt to the varying path-loss. The LOS ground-to-air channel model between the UAV and the sensors with pathloss exponent $\alpha \ge 2$ is adopted \cite{ahmed2016on}. With this model, the instantaneous data rate in the transmission interval $[x_n, y_n]$ is given by
\begin{align}
R_n(t) = \frac{1}{2}W\log_2 \bigg( 1+ \frac{p_n(t)\beta}{((x_n + v_nt - S_n)^2 + H^2)^{\frac{\alpha}{2}}}\bigg),
\end{align}
for $t \in [0, t_n]$ where $W$ is the bandwidth, $\beta$ is the reference signal-to-noise ratio (SNR) at the reference distance 1 meter, and $p_n(t)$ is the transmission power of the $n$th sensor, which satisfies the total energy constraint
\begin{align}
\int_{0}^{t_n} p_n(t) \mathrm d t \le E_n. \label{eq:Econstr}
\end{align}
{In this paper, we ignore the circuit power of the sensors and consider only the dominant transmission power. The derived results, however, can be easily extended to that with circuit power.} Besides, since each sensor $n$ requires to upload $B_n$ bits when the UAV flies over $[x_n, y_n]$, we have the data upload constraint as
\begin{align}
\int_{0}^{t_n} R_n(t) \mathrm d t \ge B_n. \label{eq:Bconstr}
\end{align}

Notice that there is a feasibility issue for data collection, i.e., with a given amount of sensor's energy $E_n$, is it feasible to upload $B_n$ bits within the time duration $t_n$? Since the best channel quality is experienced when the UAV is hovering right above the sensor $n$, the maximum number of data bits $B_n$ is related to the hovering mode, which is detailed below.

\subsubsection{Data Collection when Hovering} \label{sec:hover}
If $v_n = 0$ and $x_n = y_n$, the UAV hovers above location $x_n$. {Since the channel is unchanged, it is preferred for sensor $n$ to upload data with constant transmit power and data rate.} Denote the transmission time when hovering above the location $x_n$ by $t_n = T_{\mathrm h, n}(x_n)$. As the transmission link is static, $p_n(t)$ should be a constant in this case. Thus, sensor $n$'s energy constraint (\ref{eq:Econstr}) is simplified as
\begin{align}
T_{\mathrm h, n}(x_n)p_n(t) \le E_n.
\end{align}
To fully utilize sensor $n$'s energy budget to minimize the flight time, the transmission power should be maximized, i.e., $p_n(t) = E_n/T_{\mathrm h, n}(x_n)$. Then, the data constraint (\ref{eq:Bconstr}) is simplified as
\begin{align}
\frac{1}{2} T_{\mathrm h, n}(x_n) W \log_2 \bigg( 1+ \frac{\beta E_n}{T_{\mathrm h, n}(x_n) ((x_n-S_n)^2 + H^2)^{\frac{\alpha}{2}}}\bigg) \ge B_n. \label{eq:hover}
\end{align}
The function on the left hand side of (\ref{eq:hover}) has the following property.

\begin{lemma} \label{lemma:inc}
The function $f(x) = x \log_2 (1+ \frac{a}{x}), a > 0, x > 0$ is an increasing function, and $f(x) < \frac{a}{\ln2}$.
\end{lemma}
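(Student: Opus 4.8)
The plan is to pass to the natural logarithm and reduce both claims to elementary one-variable calculus. Writing $f(x) = \frac{1}{\ln 2}\, g(x)$ with $g(x) = x \ln\!\left(1 + \frac{a}{x}\right)$, it suffices to show that $g$ is strictly increasing on $(0,\infty)$ and that $g(x) < a$ there.

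For the monotonicity, I would differentiate. Substituting $t = a/x > 0$, a direct computation gives $g'(x) = \ln(1+t) - \frac{t}{1+t}$, so the claim reduces to showing $h(t) := \ln(1+t) - \frac{t}{1+t} > 0$ for all $t > 0$. This follows because $h(0) = 0$ while $h'(t) = \frac{1}{1+t} - \frac{1}{(1+t)^2} = \frac{t}{(1+t)^2} > 0$, so $h$ is strictly increasing and hence strictly positive on $(0,\infty)$. Therefore $g'(x) > 0$, and $f$ is increasing.

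For the upper bound, I would invoke the elementary inequality $\ln(1+u) < u$ for $u > 0$ (itself an immediate consequence of concavity of $\ln$, or of the same derivative argument). Applying it with $u = a/x$ gives $g(x) = x \ln(1 + a/x) < x \cdot \frac{a}{x} = a$, hence $f(x) < \frac{a}{\ln 2}$. Alternatively, having already established that $f$ is increasing, one can identify the supremum through the limit $\lim_{x\to\infty} f(x) = \frac{1}{\ln 2}\lim_{t\to 0^+}\frac{\ln(1+at)}{t} = \frac{a}{\ln 2}$, which the strictly increasing $f$ never attains.

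There is no real obstacle here; the only point requiring a little care is getting the sign of $g'$ correct and recognizing that the bracketed quantity $\ln(1+t) - \frac{t}{1+t}$ vanishes at $t = 0$ and increases thereafter. Everything else is routine.
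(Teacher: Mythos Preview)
Your proof is correct and is essentially the same as the paper's. The paper shows $f''(x)<0$ and then uses $f'(x)>f'(+\infty)=0$; your argument that $h(t)=\ln(1+t)-\tfrac{t}{1+t}$ has $h(0)=0$ and $h'(t)>0$ is the same computation after the substitution $t=a/x$ (since $h'(t)>0$ is equivalent, via the chain rule, to $g''(x)<0$). For the bound, the paper uses the limit $\lim_{x\to\infty}f(x)=a/\ln 2$ together with monotonicity, which you list as your alternative; your primary route via $\ln(1+u)<u$ is a slightly more direct variant but not a genuinely different idea.
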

\begin{proof}
See Appendix \ref{proof:inc}.
\end{proof}

Based on Lemma \ref{lemma:inc}, the left hand side of (\ref{eq:hover}) is an increasing function of $T_{\mathrm h, n}(x_n)$. Hence, the minimum $T_{\mathrm h, n}(x_n)$ satisfies (\ref{eq:hover}) with equality, i.e.,
\begin{align}
\frac{1}{2} T_{\mathrm h, n}(x_n) W \log_2 \bigg( 1+ \frac{\beta E_n}{T_{\mathrm h, n}(x_n) ((x_n-S_n)^2 + H^2)^{\frac{\alpha}{2}}}\bigg) = B_n. \label{eq:hovereq}
\end{align}
The above transcendental equation can be effectively solved by either line search or bisection search. As the UAV experiences the best channel condition when hovering on top of the user ($x_n = S_n$), the feasibility condition can be derived based on (\ref{eq:hovereq}) as follows.

\begin{proposition} \label{prop:feas}
\textbf{(Feasibility)} For each sensor $n$, the data constraint (\ref{eq:Bconstr}) is feasible if and only if
\begin{align}
B_n < \frac{W\beta E_n}{2H^{\alpha} \ln 2}. \label{cond:feas}
\end{align}
\end{proposition}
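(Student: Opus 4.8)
The plan is to prove the two implications separately, with the hovering‑directly‑overhead strategy serving as the extremal data collection mode.

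\emph{Necessity} (feasibility $\Rightarrow$ (\ref{cond:feas})). I would show that \emph{no} admissible choice of interval, speed and power profile — flying or hovering — can deliver $\frac{W\beta E_n}{2H^{\alpha}\ln 2}$ bits or more, so that any feasible configuration must satisfy the strict inequality. The key ingredient is the elementary bound $\log_2(1+z) < z/\ln 2$ for $z>0$, combined with $\big((x_n+v_nt-S_n)^2+H^2\big)^{\alpha/2}\ge H^{\alpha}$. Applying both to the integrand of (\ref{eq:Bconstr}) gives the pointwise estimate $R_n(t) < \frac{W\beta}{2H^{\alpha}\ln 2}\,p_n(t)$; integrating over $[0,t_n]$ and invoking the energy budget (\ref{eq:Econstr}) yields $\int_0^{t_n} R_n(t)\,\mathrm dt < \frac{W\beta E_n}{2H^{\alpha}\ln 2}$. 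Since feasibility of (\ref{eq:Bconstr}) forces $B_n\le\int_0^{t_n} R_n(t)\,\mathrm dt$, the bound (\ref{cond:feas}) follows. Note this argument is mode‑independent: the hovering case ($v_n=0$, constant $p_n$) is just a special instance, so no separate treatment is needed.

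\emph{Sufficiency} ((\ref{cond:feas}) $\Rightarrow$ feasibility). I would exhibit an explicit feasible solution in the hovering mode with $x_n=y_n=S_n$. Writing $g(T)$ for the left‑hand side of (\ref{eq:hover}) in this configuration, namely $g(T)=\tfrac12 WT\log_2\!\big(1+\tfrac{\beta E_n}{TH^{\alpha}}\big)$, Lemma \ref{lemma:inc} (applied with $a=\beta E_n/H^{\alpha}$ and $x=T$) shows that $g$ is continuous and strictly increasing on $(0,\infty)$, with $g(T)\to 0$ as $T\to 0^{+}$ and $g(T)$ increasing to $\frac{W\beta E_n}{2H^{\alpha}\ln 2}$ as $T\to\infty$. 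Under (\ref{cond:feas}), $B_n$ lies strictly between these two limiting values, so the intermediate value theorem gives a finite $T^{\star}$ with $g(T^{\star})=B_n$; taking $T_{\mathrm h,n}(S_n)=T^{\star}$ with constant transmit power $E_n/T^{\star}$ then satisfies (\ref{eq:Econstr}) and (\ref{eq:Bconstr}) simultaneously (the data constraint being met with equality, exactly as in (\ref{eq:hovereq})).

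The only mildly delicate point is the necessity direction: one must verify that the upper bound holds uniformly over \emph{all} admissible $(x_n,y_n,v_n,p_n(\cdot))$, in particular over the flying regime where the instantaneous channel is strictly worse than directly overhead. The pointwise inequality $R_n(t) < \frac{W\beta}{2H^{\alpha}\ln 2}\,p_n(t)$ already absorbs this, so I do not anticipate any genuine obstacle beyond recording that observation carefully.
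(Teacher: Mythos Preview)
Your proof is correct and close in spirit to the paper's. The sufficiency direction is identical: both of you hover at $x_n=S_n$ and invoke Lemma~\ref{lemma:inc} to show the hovering throughput can be pushed arbitrarily close to (but strictly below) $\frac{W\beta E_n}{2H^{\alpha}\ln 2}$, so that any $B_n$ strictly below this value is attainable. Your necessity argument, however, is cleaner and more complete than the paper's: the appendix only shows that the hovering equation (\ref{eq:hovereq}) has no solution when $B_n\ge\frac{W\beta E_n}{2H^{\alpha}\ln 2}$, implicitly relying on the remark in the main text that hovering directly overhead dominates all other modes; your pointwise estimate $R_n(t)<\frac{W\beta}{2H^{\alpha}\ln 2}\,p_n(t)$ applies uniformly to every admissible $(x_n,y_n,v_n,p_n(\cdot))$, flying or hovering, so no separate optimality claim for the hovering mode is needed. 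The underlying inequality is the same in both arguments---$\log_2(1+z)<z/\ln 2$ is exactly what yields the bound $f(x)<a/\ln 2$ in Lemma~\ref{lemma:inc}---but your application of it is more direct and closes the small gap the paper leaves open.
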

\begin{proof}
See Appendix \ref{proof:feas}.
\end{proof}

Hovering mode may be needed when the amount of information bits is large or the amount of sensor's energy is small. {However, collecting data when hovering may not be the most time efficient strategy when comparing to collecting data during flying.} Therefore, flying and hovering modes have to be selected depending on the values of $B_n$ and $E_n$, which will be incorporated in our problem formulation.

\subsection{Problem Formulation}
In this paper, we aim to minimize the total flight time while guaranteeing that all the sensors' data are successfully collected. If (\ref{cond:feas}) holds for all $n = 1, 2, \cdots, N$, i.e., the data collection is feasible for all the sensors, the problem can be formulated as
\begin{subequations}\label{prob:multisensor}
\begin{align}
\min_{\bm x, \bm y, \bm v, \bm p(t)} \;&\; \frac{(S_{N+1}-S_0) - \sum_{n=1}^N (y_n-x_n)}{v_{\max}} + \sum_{n=1}^N t_n \label{eq:tmin}\\
\mathrm{s.t.} \quad &\; (\ref{eq:Econstr}) \textrm{~and~} (\ref{eq:Bconstr}), \;\forall n, \nonumber\\
\;&\; S_0 \le x_1 \le y_1 \le x_2 \le \cdots \le y_N \le S_{N+1}, \label{eq:series}\\
\;&\; t_n = \frac{y_n - x_n}{v_n}I_{x_n \neq y_n} + T_{\mathrm h, n}(x_n) I_{x_n = y_n}, \;\forall n, \label{eq:time}\\
\;&\; 0 \le v_n \le v_{\max}, \;\forall n, \\
\;&\; p_n(t) \ge 0, \;\forall n, t,
\end{align}
\end{subequations}
where the optimization variables are the locations $\bm x = \{x_1, x_2, \cdots, x_N\}, \bm y = \{y_1, y_2, \cdots, y_N\}$, the UAV's speeds $\bm v = \{v_1, v_2, \cdots, v_N\}$, and the transmission power $\bm p(t) = \{p_1(t), p_2(t), \cdots, p_N(t)\}$. The function $I_{\mathrm{event}}$ is an indicator which equals 1 if the event is true and equals 0 otherwise. It can be seen that the interval variables $\bm x$, $\bm y$ for the sensors are coupled in the constraint (\ref{eq:series}), which makes (\ref{prob:multisensor}) difficult to solve. To tackle the problem, we firstly consider a single-sensor case, and then show how the solution of the single-sensor case can be extended to the general multi-sensor case in (\ref{prob:multisensor}).

\begin{figure}
\centering
\includegraphics[width=4.5in]{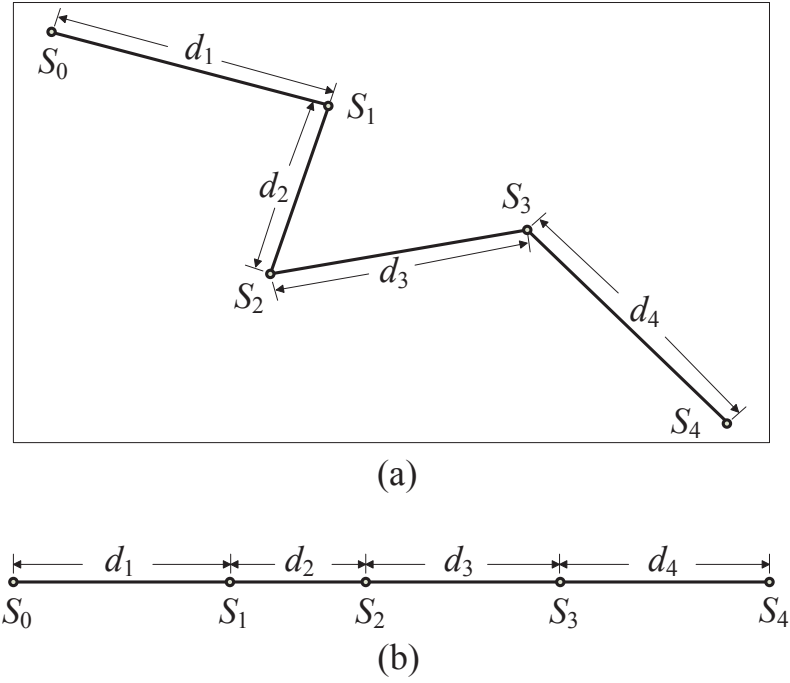}
\caption{Example of a 2D space problem represented by a line model: (a) a data collection problem in the 2D space with a given visit order; (b) equivalent representation by a line model.} \label{fig:2Dappl}
\end{figure}

{ \begin{remark}
We should emphasize here that the proposed solution for the flight time minimization problem under a line model can be applied to scenarios with sensors in the 2D space. In particular, the flight time minimization problem for the 2D space model is the same as that for a line model as long as the visit order of the sensors is given a priori, as illustrated in Fig.~\ref{fig:2Dappl}. Thus, the solution proposed for the line model can be applied to the 2D space problem by simply changing UAV's flight direction above each sensor.
\end{remark}

}

\section{Flight Time Minimization for Single-sensor Case}  \label{sec:single}
For the single-sensor case $N = 1$, without loss of generality, we set $S_1 = 0$ (origin point in the horizontal axis) and ignore the sensor index for all notations. Then, the problem (\ref{prob:multisensor}) reduce to
\begin{subequations}\label{prob:singlesensor}
\begin{align}
\min_{x, y, v, p(t)} \;&\; \frac{(S_2-S_0) - (y - x)}{v_{\max}} + t \label{eq:tmin_s}\\
\mathrm{s.t.} \quad &\; \int_{0}^{t} \frac{1}{2}W \log_2 \bigg( 1+ \frac{p(\tau)\beta}{((x + v\tau)^2 + H^2)^{\frac{\alpha}{2}}}\bigg) \mathrm d \tau \ge B, \label{eq:bits_s}\\
\;&\; \int_{0}^{t} p(\tau) \mathrm d \tau \le E, \label{eq:energy_s}\\
\;&\; S_0 \le x \le y \le S_2, \\
\;&\; t = \frac{y-x}{v} I_{x \neq y} + T_{\mathrm h}(x) I_{x = y}, \label{eq:time_s}\\
\;&\; 0 \le v \le v_{\max}, \label{eq:v_s}\\
\;&\; p(t) \ge 0. \label{eq:pt_s}
\end{align}
\end{subequations}

Since the hovering mode has been studied in the previous section, we mainly focus on the flying mode with $v > 0$ and $x < y$. The problem with only the flying mode for the single-sensor case is given by
\begin{subequations}\label{prob:single_nohover}
\begin{align}
\min_{x, y, v, p(t)} \;&\; \frac{S_2-S_0}{v_{\max}} + (y-x)\left(\frac{1}{v} - \frac{1}{v_{\max}} \right) \label{eq:tmin_sn}\\
\mathrm{s.t.} \quad &\; \int_{0}^{\frac{y-x}{v}} \frac{1}{2}W \log_2 \bigg( 1+ \frac{p(\tau)\beta}{((x + v\tau)^2 + H^2)^{\frac{\alpha}{2}}}\bigg) \mathrm d \tau \ge B, \label{eq:bits_sn}\\
\;&\; \int_{0}^{\frac{y-x}{v}} p(\tau) \mathrm d \tau \le E, \label{eq:energy_sn}\\
\;&\; S_0 \le x < y \le S_2, \\
\;&\; 0 < v \le v_{\max},  \; p(t) \ge 0. \nonumber
\end{align}
\end{subequations}

The problem (\ref{prob:single_nohover}) includes the power allocation optimization over $p(t)$, the UAV's speed optimization over $v$, and the data upload interval optimization over $x$ and $y$. These subproblems are solved separately as follows.

\subsection{Power Allocation} \label{sec:pa}
Suppose that the upload interval $[x,y]$ and the UAV's speed $v$ are fixed and given. It is obvious that to minimize the flight time, the sensor should allocate its power to maximize the uplink throughput on the left hand side of (\ref{eq:bits_sn}). Thus, let us consider the following throughput maximization problem
\begin{subequations}\label{prob:thrmax}
\begin{align}
\max_{p(\tau) \ge 0} \;&\; \int_0^{\frac{y-x}{v}} \frac{1}{2}W \log_2 \left( 1 + \frac{p(\tau)\beta}{((x+v\tau)^2 + H^2)^{\frac{\alpha}{2}}}\right) \mathrm d \tau \label{eq:maxobj}\\
\mathrm{s.t.} \;&\; \int_0^{\frac{y-x}{v}} p(\tau) \mathrm d \tau \le E.
\end{align}
\end{subequations}

Denote $s = x + v\tau$, and thus we have $\mathrm d s = v \mathrm d \tau$. By changing the variable from $\tau$ to $s$, the throughput maximization problem (\ref{prob:thrmax}) can be reformulated as
\begin{subequations}\label{prob:thrmax2}
\begin{align}
\max_{p(s) \ge 0} \;&\; \frac{1}{v} \int_{x}^{y} \frac{1}{2}W \log_2 \left( 1 + \frac{p(s)\beta}{(s^2 + H^2)^{\frac{\alpha}{2}}}\right) \mathrm d s \label{eq:thrmaxobj}\\
\mathrm{s.t.} \;&\; \frac{1}{v} \int_{x}^{y} p(s) \mathrm d s \le E. \label{eq:powerconstr}
\end{align}
\end{subequations}

Notice that the UAV receives the data from the sensor if and only if $p(s) > 0$. Otherwise, the UAV flies with the maximum speed. Therefore, the condition for $p(s) > 0$ needs to be specified. We have the following conclusion.

\begin{theorem} \label{thm:pt}
Let $p^*(s)$ be an optimal solution of the problem (\ref{prob:thrmax2}). It holds that $p^*(s) > 0$ for $x < s < y$ if and only if $x$, $y$ and $v$ satisfy
\begin{align}
(y-x) (\max\{x^2, y^2\} + H^2)^{\frac{\alpha}{2}} - \int_x^y (s^2 + H^2)^{\frac{\alpha}{2}} \mathrm d s \le \beta E v. \label{eq:cond}
\end{align}
Moreover, the optimal power allocation $p^*(s)$ is
\begin{align}
p^*(s) = \frac{1}{\gamma_0} - \frac{1}{\gamma(s)}, \label{eq:ps}
\end{align}
where the water level is
\begin{align}
\frac{1}{\gamma_0} &= \frac{vE}{y-x} + \frac{1}{(y-x)\beta} \int_x^y (s^2 + H^2)^{\frac{\alpha}{2}} \mathrm d s, \label{eq:gamma0}
\end{align}
and the inverse of channel gain is
\begin{align}
\frac{1}{\gamma(s)} &= \frac{(s^2 + H^2)^{\frac{\alpha}{2}}}{\beta}. \label{eq:gammat}
\end{align}

The corresponding optimal objective value of (\ref{eq:thrmaxobj}) is
%\begin{align}
%&B_{\max}(x,y,v) \nonumber\\
%=& \frac{1}{v} \bigg(s \log_2\frac{\beta}{\gamma_0(s^2 + H^2)} + \frac{2s}{\ln2} - \frac{2H}{\ln2} \arctan\frac{s}{H} \bigg) \bigg|_{s=x}^{s=y}. \label{eq:maxthr}
%\end{align}
\begin{align}
B_{\max}(x,y,v)
= \frac{W}{2v} \bigg(s \log_2\frac{\beta}{\gamma_0(s^2 + H^2)^{\frac{\alpha}{2}}} + \frac{\alpha s}{\ln2} - \frac{\alpha H}{\ln2} \arctan\frac{s}{H} \bigg) \bigg|_{s=x}^{s=y}. \label{eq:maxthr}
\end{align}
\end{theorem}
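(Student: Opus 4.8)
The plan is to solve the concave maximization problem (\ref{prob:thrmax2}) by Lagrangian duality, since the objective in $p(s)$ is concave (the logarithm) and the energy constraint (\ref{eq:powerconstr}) is linear. First I would write the Lagrangian $L(p,\lambda) = \frac{1}{v}\int_x^y \frac{1}{2}W\log_2(1 + p(s)\gamma(s))\,\mathrm ds - \lambda\big(\frac{1}{v}\int_x^y p(s)\,\mathrm ds - E\big)$, where $\gamma(s) = \beta/(s^2+H^2)^{\alpha/2}$ from (\ref{eq:gammat}), and apply the KKT conditions pointwise in $s$. Stationarity gives $\frac{W}{2\ln 2}\cdot\frac{\gamma(s)}{1+p(s)\gamma(s)} = \lambda$ wherever $p(s) > 0$, which rearranges to the water-filling form $p(s) = \frac{1}{\gamma_0} - \frac{1}{\gamma(s)}$ with the water level $1/\gamma_0$ tied to the multiplier via $1/\gamma_0 = W/(2\lambda\ln 2)$; when this quantity would be negative we set $p(s)=0$ by complementary slackness. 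This is the standard water-filling argument over a continuum of parallel channels.

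Next I would determine the water level. Observe that because $(s^2+H^2)^{\alpha/2}$ is increasing in $|s|$, the channel inverse $1/\gamma(s)$ is largest at whichever endpoint has larger magnitude, i.e.\ at $s$ with $s^2 = \max\{x^2,y^2\}$. Hence $p^*(s) > 0$ for \emph{all} $s \in (x,y)$ precisely when the water level exceeds the worst channel inverse, $1/\gamma_0 \ge (\max\{x^2,y^2\}+H^2)^{\alpha/2}/\beta$. Assuming this ``full-allocation'' regime, the energy constraint is tight (otherwise raising the water level strictly increases the objective), so $\frac{1}{v}\int_x^y p^*(s)\,\mathrm ds = E$, which gives $\int_x^y\big(\frac{1}{\gamma_0} - \frac{1}{\gamma(s)}\big)\mathrm ds = vE$, i.e.\ $\frac{y-x}{\gamma_0} = vE + \frac{1}{\beta}\int_x^y (s^2+H^2)^{\alpha/2}\,\mathrm ds$. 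Solving for $1/\gamma_0$ yields exactly (\ref{eq:gamma0}). Substituting this into the condition $1/\gamma_0 \ge (\max\{x^2,y^2\}+H^2)^{\alpha/2}/\beta$ and clearing denominators produces (\ref{eq:cond}); conversely, if (\ref{eq:cond}) fails then the water level from (\ref{eq:gamma0}) is too low to cover the worst channel, so $p^*$ must vanish on a subinterval near that endpoint, proving the ``only if'' direction. (Strictly, when (\ref{eq:cond}) fails one should re-derive the water level on the reduced support, but the claimed equivalence only concerns whether $p^*>0$ on all of $(x,y)$.)

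Finally I would compute the optimal value by plugging (\ref{eq:ps}) into (\ref{eq:thrmaxobj}): $B_{\max} = \frac{W}{2v}\int_x^y \log_2\!\big(1 + (\frac{1}{\gamma_0}-\frac{1}{\gamma(s)})\gamma(s)\big)\mathrm ds = \frac{W}{2v}\int_x^y \log_2\frac{\gamma(s)}{\gamma_0}\,\mathrm ds = \frac{W}{2v}\int_x^y \log_2\frac{\beta}{\gamma_0(s^2+H^2)^{\alpha/2}}\,\mathrm ds$. The remaining work is the antiderivative of $\log_2(\text{const}/(s^2+H^2)^{\alpha/2}) = \log_2(\text{const}) - \frac{\alpha}{2}\log_2(s^2+H^2)$; integrating the second term by parts, $\int \log(s^2+H^2)\,\mathrm ds = s\log(s^2+H^2) - 2s + 2H\arctan(s/H)$, and collecting the $\ln 2$ factors reproduces (\ref{eq:maxthr}) after evaluation from $s=x$ to $s=y$. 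The main obstacle is not the calculus but handling the ``if and only if'' cleanly — in particular being careful that the water-level formula (\ref{eq:gamma0}) is derived under the assumption of full support, and verifying that substituting it back self-consistently characterizes exactly the full-support case; the endpoint-comparison step (that $\max\{x^2,y^2\}$ identifies the binding channel) and the tightness of the energy constraint are the two facts that make this work.
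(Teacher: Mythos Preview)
Your proposal is correct and follows essentially the same route as the paper: Lagrangian/KKT to obtain the water-filling form, identification of the worst channel at the endpoint with $s^2=\max\{x^2,y^2\}$, tightness of the energy constraint to pin down the water level (\ref{eq:gamma0}), comparison with the worst channel inverse to get (\ref{eq:cond}), and integration by parts for (\ref{eq:maxthr}). Your remark about the self-consistency of the ``only if'' direction is exactly the point the paper handles (with the sufficiency/necessity labels arguably swapped there), and your contrapositive argument---if $p^*>0$ on all of $(x,y)$ then the water level must equal (\ref{eq:gamma0}) and hence (\ref{eq:cond}) holds---is the clean way to close it.
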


\begin{proof}
See Appendix \ref{proof:pt}.
\end{proof}

\begin{figure}[th]
\centering
\includegraphics[width=4.5in]{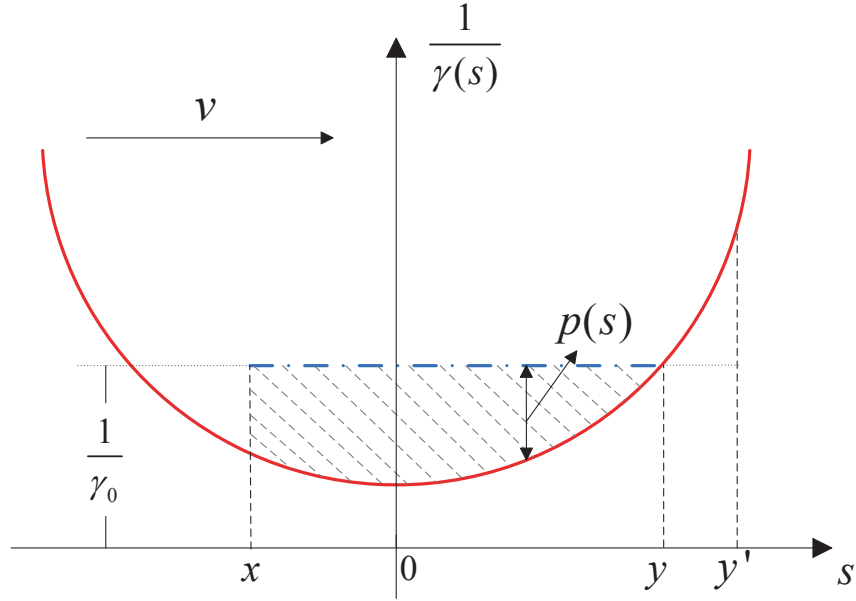}
\caption{Illustration of water-filling power allocation.} \label{fig:WF}
\end{figure}

The results in Theorem \ref{thm:pt} are interpreted in Fig.~\ref{fig:WF}. In this figure, the red solid curve represents the inverse of the channel gain $\frac{1}{\gamma(s)}$, the blue dash-dotted line represents the water level $\frac{1}{\gamma_0}$, and the area between the two curves represents the total energy budget. If $x,y$ and $v$ satisfy the condition (\ref{eq:cond}), we have $p(s)>0$ for $x < s < y$. However, if $x, y'$ and $v$ do not satisfy the condition (\ref{eq:cond}) as shown in the figure, there must be another set of $x, y$ and $v$ where $y < y'$ so that $p(s)>0$ for $x < s < y$. Therefore, the data upload must be within the range $[x, y]$. Theorem \ref{thm:pt} explicitly gives the feasible region of $x, y, v$ for optimal data collection.

For the free space LOS channel model with $\alpha = 2$, the condition can be further simplified by calculating the integration. In particular, we have
\begin{align}
\int_x^y (s^2 + H^2)^{\frac{\alpha}{2}} \mathrm d s \bigg|_{\alpha = 2}&= \int_x^y (s^2 + H^2) \mathrm d s \nonumber\\
&= \frac{y^3 - x^3}{3} + (y-x)H^2 \nonumber\\
&= (y-x) \bigg( \frac{x^2 + xy + y^2}{3} + H^2 \bigg).
\end{align}
Replacing the term $\int_x^y (s^2 + H^2)^{\frac{\alpha}{2}} \mathrm d s$ by the above expression, the condition (\ref{eq:cond}) can be expressed as the following two conditions:
\begin{itemize}
\item[] (a) $|x| \le |y|$ and ${2y^3 + x^3 - 3y^2x} \le {3\beta E}v$,

\item[] (b) $|x| \ge |y|$ and ${3x^2y - 2x^3 - y^3} \le {3\beta E}v$,
\end{itemize}
and the water level can be expressed as
\begin{align}
\frac{1}{\gamma_0} &= \frac{vE}{y-x} + \frac{x^2 + xy + y^2}{3\beta} + \frac{H^2}{\beta}. \label{eq:gamma0f}
\end{align}

\subsection{UAV's Speed Optimization} \label{sec:speed}
With the optimal power allocation, the maximum throughput $B_{\max}(x,y,v)$ in (\ref{eq:maxthr}) has the following property.

\begin{theorem} \label{prop:decv}
The maximum throughput $B_{\max}(x,y,v)$ is a decreasing function of $v$.
\end{theorem}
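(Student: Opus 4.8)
The plan is to sidestep a direct differentiation of the cumbersome closed form \eqref{eq:maxthr} and instead use a change of variables that makes the monotonicity in $v$ transparent. First I would return to the distance-domain throughput-maximization problem \eqref{prob:thrmax2}, whose optimal value is by definition $B_{\max}(x,y,v)$, and substitute $q(s):=p(s)/v$. Under this substitution the power constraint \eqref{eq:powerconstr} becomes $\int_x^y q(s)\,\mathrm d s\le E$, which no longer depends on $v$, while the objective turns into
\begin{align}
B_{\max}(x,y,v)=\max_{q(s)\ge 0,\ \int_x^y q(s)\,\mathrm d s\le E}\ \frac{1}{v}\int_x^y \frac{W}{2}\log_2\!\left(1+\frac{v\,q(s)\,\beta}{(s^2+H^2)^{\frac{\alpha}{2}}}\right)\mathrm d s.
\end{align}
The key gain is that the feasible set of $q(\cdot)$ is now entirely independent of $v$, so all $v$-dependence has been pushed into the integrand.

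Second, I would show that for every fixed feasible $q(\cdot)$ the integral on the right-hand side is a decreasing function of $v$, and for this it suffices to treat the integrand pointwise in $s$. Writing $c:=q(s)\beta/(s^2+H^2)^{\frac{\alpha}{2}}\ge 0$, we have $\frac{1}{v}\log_2(1+cv)=f(1/v)$ with $f(u)=u\log_2(1+c/u)$, which is precisely the function of Lemma~\ref{lemma:inc} (take $a=c$) and is therefore increasing in $u$; since $1/v$ is decreasing in $v$, the quantity $\frac{1}{v}\log_2(1+cv)$ is decreasing in $v$ (and strictly so whenever $c>0$, by the strict monotonicity in Lemma~\ref{lemma:inc}). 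Integrating over $s\in[x,y]$ preserves this, so for each fixed feasible $q$ the objective is decreasing in $v$.

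Third, I would finish with the elementary fact that a pointwise maximum of a family of decreasing functions taken over a common ($v$-independent) index set is itself decreasing: for $v_1<v_2$ and every feasible $q$, the inner objective at $v_2$ is $\le$ its value at $v_1$, hence taking the maximum over $q$ on both sides yields $B_{\max}(x,y,v_2)\le B_{\max}(x,y,v_1)$. Strictness follows by evaluating at the maximizer for $v_2$, which is not identically zero when $E>0$, together with the strict inequality from Lemma~\ref{lemma:inc}. I expect the only subtle point to be the reparametrization step—checking that $q=p/v$ cleans the constraint set while leaving the $v$ inside the logarithm—after which the conclusion is immediate from Lemma~\ref{lemma:inc}; differentiating \eqref{eq:maxthr} in $v$ directly would also work but is the messy route I would avoid.
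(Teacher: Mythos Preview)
Your argument is correct and is genuinely different from the paper's. The paper plugs in the explicit water-filling optimum to write $B_{\max}(x,y,v)=\tfrac{W}{2}\,g(1/v)$ with $g(u)=u\big(a_1\log_2\!\big(\tfrac{1}{a_1}(E/u+a_2)\big)-a_3\big)$ for constants $a_1=y-x$, $a_2=\int_x^y 1/\gamma(s)\,\mathrm ds$, $a_3=\int_x^y\log_2(1/\gamma(s))\,\mathrm ds$; it then shows $g''(u)<0$, evaluates $g'(+\infty)=a_1\log_2(a_2/a_1)-a_3\ge 0$ via Jensen's inequality for the concave $\log$, and concludes $g'(u)>0$, hence $B_{\max}$ decreasing in $v$. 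Your route bypasses the optimal $p^*(s)$ entirely: the substitution $q=p/v$ freezes the feasible set, and the pointwise monotonicity is exactly Lemma~\ref{lemma:inc}, so the result drops out of the ``max of a $v$-independent family of decreasing functions is decreasing'' principle. What you gain is brevity and robustness---your proof would survive unchanged even if the channel gain $\gamma(s)$ were replaced by an arbitrary positive function, whereas the paper's computation leans on the specific closed form in \eqref{eq:gamma0}--\eqref{eq:gammat}. What the paper's route buys is an explicit expression for $B_{\max}$ as a function of $v$ (needed anyway for the bisection in Algorithm~\ref{alg:vp}), so the monotonicity comes ``for free'' once that formula is in hand. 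The only caveat in your write-up is to note that Lemma~\ref{lemma:inc} is stated for $a>0$; when $c=q(s)\beta/(s^2+H^2)^{\alpha/2}=0$ the integrand is identically zero and trivially nonincreasing, which you implicitly handle but could state explicitly.
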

\begin{proof}
See Appendix \ref{proof:decv}.
\end{proof}

Based on Theorem \ref{prop:decv}, the feasibility of any solution $(x, y, v, p(t))$ is guaranteed if the minimum speed satisfies (\ref{eq:cond}). The minimum speed can be written as a function of $x$ and $y$, i.e.,
\begin{align}
v_{\mathrm{m}}(x, y) = \min \bigg\{ v_{\max}, \frac{1}{\beta E} \Big( (y-x) (\max\{x^2, y^2\} + H^2)^{\frac{\alpha}{2}} - \int_x^y (s^2 + H^2)^{\frac{\alpha}{2}} \mathrm d s \Big)\bigg\}. \label{eq:vm}
\end{align}
When $\alpha = 2$, the minimum speed can be rewritten as
%\begin{align}
%v_{\mathrm{m}}(x,y)
%= \left\{ \begin{array}{ll} \frac{2y^3 + x^3 - 3y^2x} {3\beta E}, & \textrm{if~} |x| \le |y| \textrm{~and~} \\
%{} & 2y^3 + x^3 -3y^2x \le 3\beta E v_{\max},\\
%\frac{3x^2y - 2x^3 - y^3} {3\beta E}, & \textrm{if~} |x| \ge |y| \textrm{~and~} \\
%{} & 3x^2y - 2x^3 - y^3 \le 3\beta E v_{\max},\\
%v_{\max}, & \textrm{otherwise}.
%\end{array} \right. \label{eq:vmin}
%\end{align}
\begin{align}
v_{\mathrm{m}}(x,y)
= \left\{ \begin{array}{ll} \frac{2y^3 + x^3 - 3y^2x} {3\beta E}, & \textrm{if~} |x| \le |y| \textrm{~and~}  2y^3 + x^3 -3y^2x \le 3\beta E v_{\max},\\
\frac{3x^2y - 2x^3 - y^3} {3\beta E}, & \textrm{if~} |x| \ge |y| \textrm{~and~}  3x^2y - 2x^3 - y^3 \le 3\beta E v_{\max},\\
v_{\max}, & \textrm{otherwise}.
\end{array} \right. \label{eq:vmin}
\end{align}

According to Theorems \ref{thm:pt} and \ref{prop:decv}, for given $x, y$ and the optimal power allocation, the optimization over $v$ can be formulated as
\begin{subequations}\label{prob:single_v}
\begin{align}
\min_{v} \;&\; \frac{(S_2-S_0)}{v_{\max}} +  (y - x)\Big(\frac{1}{v} - \frac{1}{v_{\max}}\Big) \label{eq:tmin_v}\\
\mathrm{s.t.} \;&\; B_{\max}(x,y, v) \ge B, \label{eq:vfeas}\\
\;&\; v_{\mathrm{m}}(x,y) \le v \le v_{\max}. \label{eq:vrange}
\end{align}
\end{subequations}

Problem (\ref{prob:single_v}) can be solved in two steps. Firstly, we check the feasibility of problem (\ref{prob:single_v}). Based on Theorem \ref{prop:decv}, if $B_{\max}(x,y, v_{\mathrm{m}}(x,y)) \ge B$, (\ref{prob:single_v}) is feasible, and we go to the second step. As the objective function (\ref{eq:tmin_v}) is a decreasing function of $v$, the optimal speed, denoted by $v^*(x,y)$, is the maximum feasible speed that satisfies (\ref{eq:vfeas}) in $[v_{\mathrm{m}}(x,y), v_{\max}]$. Since $B_{\max}(x,y,v)$ is a decreasing function of $v$, $v^*(x,y)$ can be found by bisection search algorithm. In summary, the algorithm to obtain the optimal $v$ and $p(s)$ for given $x$ and $y$ where $x < y$ in problem (\ref{prob:single_nohover}) is summarized in Algorithm \ref{alg:vp}.

{
\renewcommand{\algorithmicrequire}{\textbf{Input:}}
\renewcommand{\algorithmicensure}{\textbf{Output:}}

\begin{algorithm}[th]
\caption{Calculate $v$ and $p(s)$ for given $(x, y)$ in problem (\ref{prob:single_nohover})} \label{alg:vp}
\begin{algorithmic}[1]

\REQUIRE {$\beta, H, W, \alpha, B, E, \delta, x$ and $y$ where $x < y$.}

\ENSURE {$v^*, p^*(s)$.}

\STATE Calculate $v_{\mathrm{m}}(x,y)$ according to (\ref{eq:vm}).

\IF {$B_{\max}(x,y, v_{\mathrm{m}}(x,y)) \ge B$}

\STATE Find the maximum $v \in [v_{\mathrm{m}}(x,y), v_{\max}]$ that satisfies $B_{\max}(x,y, v) \ge B$ by bisection search.

\STATE Set $v^* = v$, and $p^*(s)$ is calculated according to (\ref{eq:ps})-(\ref{eq:gammat}).

\ELSE

\STATE Problem (\ref{prob:single_nohover}) for given $x$ and $y$ is infeasible.

\ENDIF

\end{algorithmic}
\end{algorithm}

In this algorithm, line 2 examines the feasibility of the problem. If the inequality does not hold, there is no feasible solution for the given parameters, and the algorithm terminates. Otherwise, bisection search is launched as in line 3. When the bisection search terminates, the optimal solution is recorded in line 4.

}

\begin{remark}
It is interesting to remark that the maximum throughput in (\ref{prob:thrmax2}) can be re-written as
\begin{align}
B_{\max}(x,y,v) = E\max_{p(s)} \frac{ \frac{1}{y-x} \int_{x}^{y} \frac{1}{2} W \log_2 \left( 1 + \frac{p(s)\beta}{(s^2 + H^2)^{ \frac{\alpha}{2}}}\right) \mathrm d s}{\frac{v}{y-x}E},
\end{align}
where $p(s)$ is constrained by $\frac{1}{y-x}\int_{x}^{y} p(s) \mathrm d s  \le \frac{v}{y-x}E$ which should be satisfied with equality to achieve the maximum. The term on the right side of the operator $\max$ can be viewed as the energy efficiency (achievable data rate per unit power) with ``average power budget" $\frac{v}{y-x}E$. Therefore, Theorem \ref{prop:decv} says that the energy efficiency is a decreasing function of the power budget in fading channels. It extends the result from the AWGN channel \cite{verdu1990on} to the UAV LOS channel.
\end{remark}

\subsection{Data Collection Interval Optimization}
Finally, we consider the problem of determining $x$ and $y$ in problem (\ref{prob:single_nohover}), which can be written as
\begin{subequations}\label{prob:single_xy}
\begin{align}
\min_{x, y} \;&\; \frac{(S_2-S_0)}{v_{\max}} +  (y - x)\Big(\frac{1}{v^*(x, y)} - \frac{1}{v_{\max}}\Big) \label{eq:tmin_xy}\\
\mathrm{s.t.} \;&\; S_0 \le x < y \le S_2,
\end{align}
\end{subequations}
where $v^*(x,y)$ is the optimal solution of (\ref{prob:single_v}). As $v^*(x, y)$ is a complex function of $x$ and $y$, there is no efficient algorithms other than two-dimensional line search to solve the problem (\ref{prob:single_xy}). By sampling $m$ points in the range $[S_0, S_2]$ with identical inter-point distance, the total number of search pairs $(x, y)$ where $x < y$ is $\frac{m(m+1)}{2}$. Thus, the complexity of the two-dimensional search is $O(m^2)$.

\section{Flight Time Minimization for Multi-sensor Case} \label{sec:multi}
In the multi-sensor case, the data upload intervals for the sensors correlates with one another. In particular, if a sensor's data upload interval is wide, the one next to it can only have a short data upload interval. To deal with the inter-sensor correlation, we adopt the DP approach \cite{bertsekas2005dynamic} to solve the flight time minimization problem for multiple sensors. Firstly, the basic concept of the DP algorithm is briefly reviewed as follows.

{\subsection{Brief Introduction to DP Algorithm}
The DP algorithm deals with decision making problems in dynamic systems which can be divided into \emph{stages}. The dynamic system expresses the evolution of the system \emph{states} $s_k \in \mathcal{S}_k$ under the influence of state-dependent control \emph{actions} $u_k \in \mathcal{U}_k(s_k)$ taken at discrete instances of time (\emph{stage}) $k$. The system state updates as $s_{k+1} = f_k(s_k, u_k)$. There is an additive \emph{cost} $g_k(s_k, u_k)$ in each stage. The objective is to minimize the total cost by finding the optimal control actions for a given initial state, i.e., $\min_{u_0, u_1, \cdots, u_{K-1}} \left\{ \sum_{k=0}^{K-1} g_k(s_k, u_k) + g_K(s_K) | s_0\right\}.$ This problem can be solved by the DP algorithm \cite[Prop.~1.3.1, Vol.~I]{bertsekas2005dynamic}, i.e., proceeding the following backward in time from stage $K-1$ to stage 0
\begin{align}
&J_K(s_K) = g_K(s_K), \quad \forall s_K \in \mathcal{S}_K, \nonumber\\
&J_k(s_k) = \min_{u_k \in \mathcal{U}_k(s_k)} \left\{ g_k(s_k, u_k) + J_{k+1}(f_k(s_k, u_k)) \big| s_k\right\}, \quad \forall s_k \in \mathcal{S}_k, k = K-1, \cdots, 0, \nonumber
\end{align}
where $J_k(s_k)$ is termed as the optimal \emph{cost-to-go}, i.e., the minimum cost for the ($K-k$)-stage problem that starts at stage $k$ with state $s_k$ and ends at stage $K$.
}

\subsection{DP-based Flight Time Minimization}
Now, we apply the DP algorithm to solve the flight time minimization problem (\ref{prob:multisensor}). Firstly, the objective function (\ref{eq:tmin}) can be rewritten as
\begin{align}
&\; \min_{\bm x, \bm y, \bm v, \bm p(t)} \frac{S_{N+1}-S_0}{v_{\max}} + \sum_{n=1}^N \left(t_n -\frac{y_n-x_n}{v_{\max}} \right) \nonumber\\
=& \quad \min_{\bm x, \bm y} \; \bigg[ \frac{S_{N+1}-S_0}{v_{\max}} + \sum_{n = 1}^N \min_{v_n, p_n(t)}\left(t_n -\frac{y_n-x_n}{v_{\max}} \right)\bigg], \label{eq:probTmin}
\end{align}
where the minimization over $v_n, p_n(t)$ for a given pair $x_n < y_n$ corresponds to the single-sensor flying case (\ref{prob:single_nohover}) and can be efficiently solved by Algorithm \ref{alg:vp}. If $x_n = y_n$, i.e., the UAV hovers at location $x_n$, the minimization takes the value with $t_n = T_{\mathrm{h},n}(x_n)$ as the solution for (\ref{eq:hovereq}). Thus, according to the results in Sections \ref{sec:hover}), \ref{sec:pa}, and \ref{sec:speed}, we can define a cost function as
\begin{align}
g_n(x_n, y_n) &= \min_{v_n, p_n(t)}\left(t_n -\frac{y_n-x_n}{v_{\max}} \right)\nonumber\\
&= \left\{\!\begin{array}{ll}
T_{\textrm{h},n}(x_n), & \textrm{if~} x_n = y_n,\\
(y_n \!-\! x_n)\big(\frac{1}{v^*_{n}(\tilde x_n, \tilde y_n)} \!- \! \frac{1}{v_{\max}}\big), &\textrm{if~} x_n < y_n \textrm{~and~} (\ref{prob:single_v}) \textrm{~is~feasible},\\
+\infty, & \textrm{elsewhere},
\end{array}\right. \label{eq:cost}
\end{align}
for all $n = 1, 2, \cdots, N$, where $\tilde x_n = x_n - S_n, \tilde y_n = y_n - S_n$ are the horizontal coordinates relative to $S_n$, $v^*_{n}(\tilde x_n, \tilde y_n)$ is the optimal feasible solution of (\ref{prob:single_v}) that can be calculated via Algorithm \ref{alg:vp}, and $T_{\textrm{h},n}(x_n)$ is the minimum hovering time obtained by solving (\ref{eq:hovereq}). If $x_n < y_n$ while (\ref{prob:single_v}) is infeasible, we set the cost as infinity.

Based on the above cost function, we formulate the flight time minimization problem as a DP problem. In particular, we have
\begin{itemize}
\item[-] index of stage: $n$,
\item[-] system state in stage $n$: the end point of data upload for sensor $n-1$, denoted by $s_n = y_{n-1}$. The state space is $\mathcal{S}_n = [S_0, S_{N+1}]$,
\item[-] control action in stage $n$: the data upload interval for sensor $n$, i.e., $(x_n, y_n)$. The action space is $\mathcal{U}_n(s_n) = \{(x_n, y_n)|s_n \le x_n \le y_n \le S_{N+1}\}$,
\item[-] state update rule: $s_{n+1} = f_n(s_n, x_n, y_n) = y_n$,
\item[-] per-stage cost: $g_n(x_n, y_n), n = 1, 2, \cdots, N$ as defined in (\ref{eq:cost}), and $g_{N+1}(s_{N+1}) = \frac{S_{N+1}-S_0}{v_{\max}}$.
\end{itemize}
As a result, the problem (\ref{eq:probTmin}) can be rewritten as
\begin{align}
\min_{\bm x, \bm y} \left[ \sum_{n = 1}^N g_n(x_n, y_n) + g_{N+1}(y_{N}) \right],
\end{align}
which can be solved by recursively calculating the cost-to-go function stage-by-stage as
\begin{align}
&J_{N+1}(s_{N+1}) = g_{N+1}(s_{N+1}) = \frac{S_{N+1}-S_0}{v_{\max}}, \quad \forall s_{N+1}, \label{eq:DPN}\\
&J_n(s_n) = \min_{s_n \le x_n \le y_n \le S_{N+1}} \{g_n(x_n, y_n) + J_{n+1}(y_n)\}, \quad \forall s_{n}, n = N, N-1, \cdots, 1. \label{eq:DPn}
\end{align}
Then, the minimum flight time can be obtain in the last step, i.e.
\begin{align}
T_{\min} =  J_1(S_0).
\end{align}

In addition, if the optimal control actions for (\ref{eq:DPn}) are $(x_1^*, y_1^*), (x_2^*, y_2^*), \cdots, (x_N^*, y_N^*)$, the optimal solution for the problem (\ref{eq:probTmin}) is $\bm x^* = \{x_1^*, x_2^*, \cdots, x_N^*\}, \bm y^* = \{y_1^*, y_2^*, \cdots, y_N^*\}$. Thus, the optimal solution of the original problem (\ref{prob:multisensor}) is $\bm x^*, \bm y^*$ joint with the optimal speeds $v^*_{n}, n = 1, 2, \cdots, N$ obtained by solving (\ref{prob:single_v}) and the optimal power allocation in (\ref{eq:ps}).

It is remarkable that the computational complexity for the calculation of the cost-to-go functions $J_n(s_n)$ can be reduced by exploring the property of (\ref{eq:DPn}).

\begin{proposition} \label{prop:complex}
Concerning the DP algorithm (\ref{eq:DPN}) and (\ref{eq:DPn}), for any given $n$ and $s_n$, if the optimal solution $(x_n^*, y_n^*)$ for the minimization problem in (\ref{eq:DPn}) satisfies $x_n^* > s_n$, we have $J_n(s_n') = J_n(s_n)$ for all $s_n' \in [s_n, x_n^*]$.
\end{proposition}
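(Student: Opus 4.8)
The plan is to exploit the fact that in the recursion (\ref{eq:DPn}) the state $s_n$ enters only through the feasible set $\mathcal{U}_n(s_n)=\{(x_n,y_n): s_n\le x_n\le y_n\le S_{N+1}\}$, while the objective $g_n(x_n,y_n)+J_{n+1}(y_n)$ is a function of $(x_n,y_n)$ alone (and, through $J_{n+1}$, of $y_n$) — never of $s_n$. This is immediate from the definition of $g_n$ in (\ref{eq:cost}) and of $J_{n+1}$. Hence $J_n(\cdot)$ is the minimization of one fixed function over a feasible set that only shrinks as the state increases.

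First I would record the monotonicity consequence: if $s_n\le s_n'$ then $\mathcal{U}_n(s_n')\subseteq\mathcal{U}_n(s_n)$, so minimizing the same objective over the smaller set cannot yield a smaller value, i.e. $J_n(s_n')\ge J_n(s_n)$. In particular this holds for every $s_n'\in[s_n,x_n^*]$. Next I would establish the reverse inequality by reusing the optimizer: since by hypothesis $x_n^*>s_n$ and $s_n'\le x_n^*$, the pair $(x_n^*,y_n^*)$ still satisfies $s_n'\le x_n^*\le y_n^*\le S_{N+1}$, hence $(x_n^*,y_n^*)\in\mathcal{U}_n(s_n')$. Because $(x_n^*,y_n^*)$ attains the minimum in (\ref{eq:DPn}) for the state $s_n$, feasibility at $s_n'$ gives
\[
J_n(s_n')\le g_n(x_n^*,y_n^*)+J_{n+1}(y_n^*)=J_n(s_n).
\]
Combining the two inequalities yields $J_n(s_n')=J_n(s_n)$ for all $s_n'\in[s_n,x_n^*]$, which is the claim. (Equivalently, one can note in a single line that $(x_n^*,y_n^*)$ remains feasible at $s_n'$ and remains optimal because it was optimal over the superset $\mathcal{U}_n(s_n)$.)

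There is essentially no hard part here: the result is a pure ``shrinking feasible set plus reuse the optimizer'' observation. The only point that warrants a sentence of care is verifying that neither the per-stage cost $g_n$ nor the cost-to-go $J_{n+1}$ carries any dependence on $s_n$, since that is exactly what makes reusing $(x_n^*,y_n^*)$ at the state $s_n'$ legitimate; this is immediate from (\ref{eq:cost}) and the definition of the cost-to-go functions. One may also remark that the same argument shows $J_n$ is nondecreasing and is constant on every such ``gap interval'' $[s_n,x_n^*]$, which is precisely the structure the complexity reduction exploits: once $J_n$ has been evaluated at $s_n$ and the optimizer is found to have $x_n^*>s_n$, its value on the whole interval $[s_n,x_n^*]$ is known without further search.
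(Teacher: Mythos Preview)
Your proof is correct and follows essentially the same two-inequality argument as the paper: the feasible set shrinks as $s_n$ increases (giving $J_n(s_n')\ge J_n(s_n)$), while the optimizer $(x_n^*,y_n^*)$ remains feasible at any $s_n'\le x_n^*$ (giving $J_n(s_n')\le J_n(s_n)$). Your explicit remark that $g_n$ and $J_{n+1}$ carry no dependence on $s_n$ is a nice clarification the paper leaves implicit, but the logical structure is identical.
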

\begin{proof}
See Appendix \ref{proof:complex}.
\end{proof}

According to Proposition \ref{prop:complex}, to reduce the computational complexity, the calculation of $J_n(s_n)$ for a given $n$ can be launched from the initial point $S_0$ to the destination $S_{N+1}$. When an optimal solution $(x_n^*, y_n^*)$ for a given $s_n$ is found and satisfies $s_n < x_n^*$, the calculation of $J_n(s_n')$ for $s_n' \in [s_n, x_n^*]$ can be omitted as the optimal solutions are equivalent to $(x_n^*, y_n^*)$.

\section{Numerical Results} \label{sec:sim}
Some numerical results are shown in this section. In the numerical simulations, we set $H = 100$ m, {the reference SNR at transmission distance 1 m is set to} $\beta = 80$ dB \cite{lyu2016cyclical}, and the channel bandwidth $W = 20$ kHz. According to the state-of-the-art in the industry \cite{Dji2017inspire}, we set the maximum speed $v_{\max} = 26$ m/s.

\subsection{Single-sensor Case Study}
The optimal result for the single-sensor case versus different values of data upload requirement $B$ and sensor energy constraint $E$ with $S_0 = -5000$ m, $S_1 = 0$ m, $S_2 = 5000$ m are depicted in Figs.~\ref{fig:xversusB} and \ref{fig:xversusE}, respectively. It can be seen that the optimal transmission interval $(x, y)$ is symmetric, which corresponds to the shortest average transmission distance from the sensor to the UAV. In Fig.~\ref{fig:xversusB}, when $B > 5.7$ Mb, the optimal solution is hovering above the sensor to receive data. When $2.5$ Mb $ < B < 5.7$ Mb, both the length of data upload interval and the UAV's speed decreases as the data upload requirement increases. The decrease of data upload interval increases the channel gain between the UAV and the sensor, and the decrease of the UAV's speed increases the transmission time. When $B < 2.5$ Mb, the UAV can fly with the maximum speed while successfully receive all the uploaded data. In this range, the minimum data upload interval is depicted, and its length decreases as the data upload requirement decreases as the time required for data upload decreases.

\begin{figure}
\centering
\includegraphics[width=4.4in]{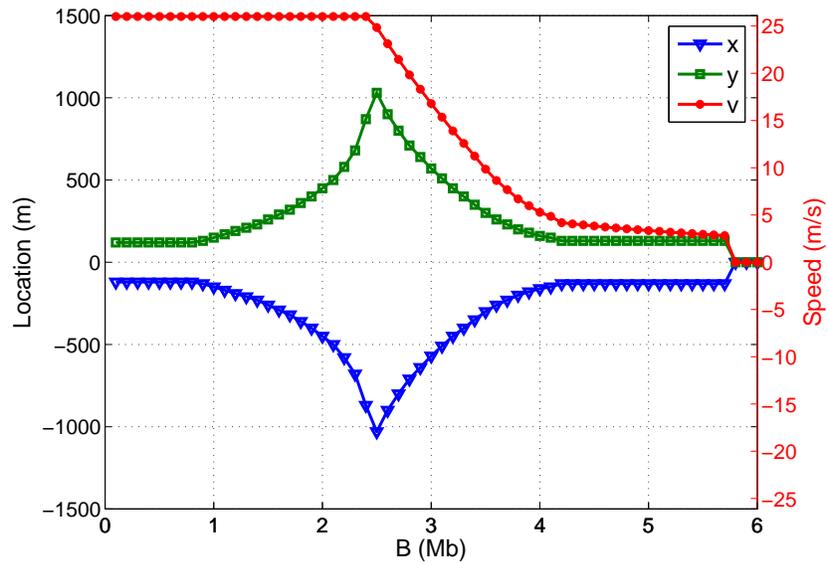}
\caption{Optimal solution $(x, y, v)$ versus $B$ for the single-sensor case, with $S_0 = -5000$ m, $S_1 = 0$ m, $S_2 = 5000$ m, and $E = 1$ J.} \label{fig:xversusB}
\end{figure}

\begin{figure}
\centering
\includegraphics[width=4.4in]{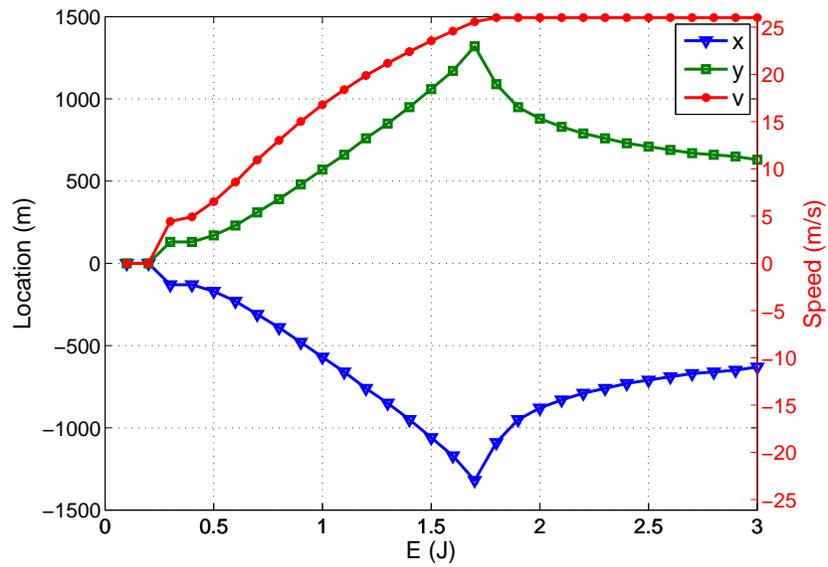}
\caption{Optimal solution $(x, y, v)$ versus $E$ for the single-sensor case, with $S_0 = -5000$ m, $S_1 = 0$ m, $S_2 = 5000$ m, and $B = 3$ Mb.} \label{fig:xversusE}
\end{figure}

In Fig.~\ref{fig:xversusE}, the optimal result versus $E$ is opposite to that versus $B$. In particular, when $E < 0.3$ J, the UAV also needs to hover above the sensor to receive data. When $0.3$ J $< E < 1.7$ J, both the length of the data upload interval and the UAV's speed increases as the amount of energy increases. While for $E > 1.7$ J, the UAV can fly with the maximum speed, and the minimum length of the data upload interval decreases as the amount of energy increases.

\subsection{Multi-sensor Case Study}
Then, the data collection for multiple sensors is studied by simulation. In particular, the UAV flies from $S_0 = 0$ m to $S_{N+1} = 10000$ m, during which $N=10$ sensors are deployed. The locations of the sensors $S_n, n = 1, \cdots, 10$ are fixed as 500m, 2500m, 4500m, 6500m, 7000m, 7500m, 8000m, 8500m, 9000m, and 9500m, i.e., the first four sensors are 2000m apart from one another (sparsely deployed), and the last six sensors are 500m apart from one another (densely deployed). We study the impact of required data and energy limitation respectively.

\begin{figure}
\centering
\includegraphics[width=4.4in]{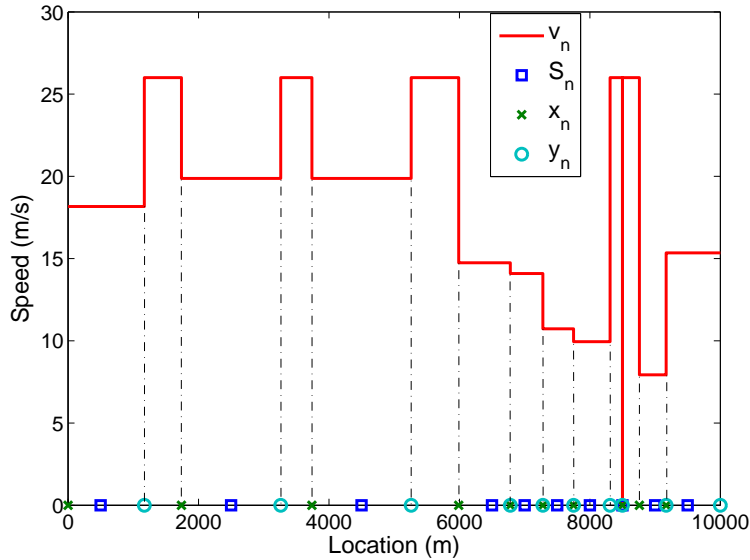}
\caption{Optimal solution $(x_n, y_n, v_n)$ for $N = 10$ sensors with $E_n = 1.2$ J for all $n = 1, 2, \cdots, N$, $B_1 = \cdots = B_4 = B_6 = B_{10} = 3$ Mbits, $B_5 = 2.5$ Mbits, $B_7 = B_{9} = 3.5$ Mbits, and $B_8 = 7$ Mbits.} \label{fig:OptvsB}
\end{figure}

In Fig.~\ref{fig:OptvsB}, the amount of energy in each sensor is set identical, $E_n = 1.2$ J for all $n = 1, 2, \cdots, N$, and the amount of data to be transmitted varies. We set $B_1 = \cdots = B_4 = B_6 = B_{10} = 3$ Mbits, $B_5 = 2.5$ Mbits, $B_7 = B_{9} = 3.5$ Mbits, and $B_8 = 7$ Mbits. It can be seen that as the first four sensors are sparsely located, the upload intervals are disconnected. The reason is that it is not energy-efficient when the transmission distance is large. In this case, the UAV collects data from a sensor in a short range and then flies towards another with the maximum speed. For the last six sensors, as the amount of data to be transmitted increases from sensor $S_5$ to sensor $S_8$ and then decreases from $S_8$ to $S_{10}$, the UAV's speed firstly decreases and then increases accordingly so that the required data can be uploaded successfully. Particularly, As the amount of data in sensor $S_8$ is extremely large, the UAV hovers above it to collect data with maximum data rate, so that the overall flight time is minimized.

\begin{figure}
\centering
\includegraphics[width=4.4in]{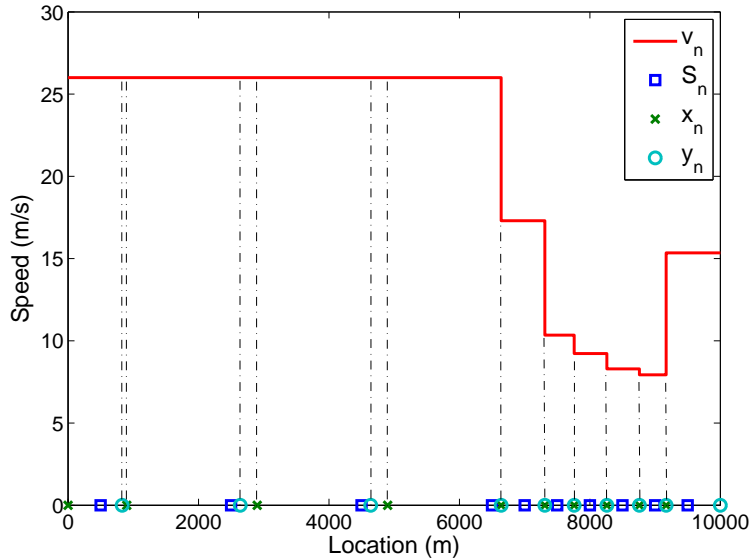}
\caption{Optimal solution $(x_n, y_n, v_n)$ for $N = 10$ sensors with $E_n = 1.2$ J for all $n = 1, 2, \cdots, N$, $B_1 = \cdots = B_4 = B_6 = B_{10} = 2$ Mbits, $B_5 = 2.5$ Mbits, $B_7 = B_{9} = 3.5$ Mbits, and $B_8 = 3.8$ Mbits.} \label{fig:OptvsBv2}
\end{figure}

In Fig.~\ref{fig:OptvsBv2}, we change the values of the amount of data bits as $B_1 = \cdots = B_4 = B_6 = B_{10} = 2$ Mbits, $B_5 = 2.5$ Mbits, $B_7 = B_{9} = 3.5$ Mbits, and $B_8 = 3.8$ Mbits. Firstly, as the data bits in sensors $S_1, \cdots, S_4$ are limited, the UAV can successfully receive all the data bits when flying with maximum speed. In addition, as the data bits in sensor $S_8$ are reduced compared with Fig.~\ref{fig:OptvsB}, the hovering mode is not necessary any more. As the amount of data bits is still the largest, the speed is quite low.

\begin{figure}
\centering
\includegraphics[width=4.4in]{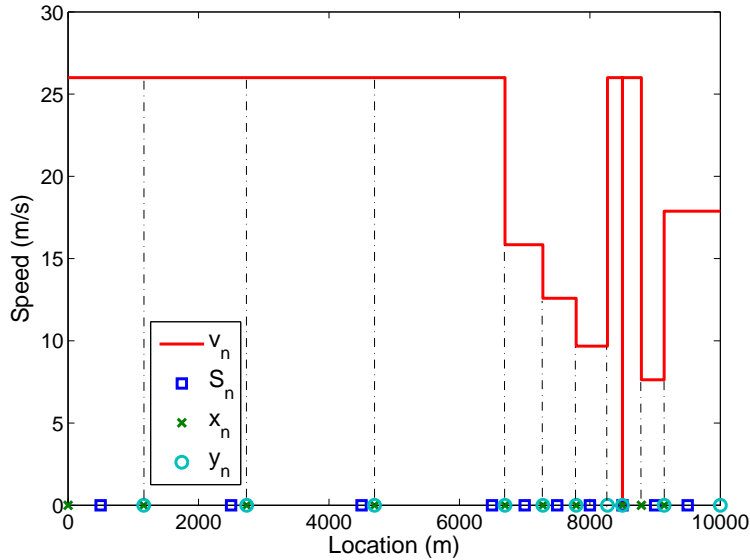}
\caption{Optimal solution $(x_n, y_n, v_n)$ for $N = 10$ sensors with $B_n = 3$ Mbits for all $n = 1, 2, \cdots, N$, and $E_1 = \cdots = E_4 = 3.6$ J, $E_5 = 3.2$ J, $E_6 = E_{10} = 1.8$ J, $E_7 = E_9 = 0.8$ J, and $E_8 = 0.2$ J.} \label{fig:OptvsE}
\end{figure}

Then, we set the amount of data in each sensor to be the same, i.e., $B_n = 3$ Mbits for all $n = 1, 2, \cdots, N$, while $E_1 = \cdots = E_4 = 3.6$ J, $E_5 = 3.2$ J, $E_6 = E_{10} = 1.8$ J, $E_7 = E_9 = 0.8$ J, and $E_8 = 0.2$ J to evaluate the impact of the energy constraint. It can be seen that with sufficient amount of energy for the first four sensors, the UAV can fly with maximum speed while successfully receiving all the data. For the last six sensors, as the amount of energy firstly decreases and then increases from sensor $S_5$ to sensor $S_{10}$, the optimal speed also decreases at first and then increase. In particular, as the eighth sensor is quite energy stringent, the UAV hovers above it with zero speed to collect its data. In addition, the transmission intervals for the first four sensors shift towards the initial point so that more space can be reserved for the last six sensors which has limited energy budget.

\begin{figure}
\centering
\includegraphics[width=4.4in]{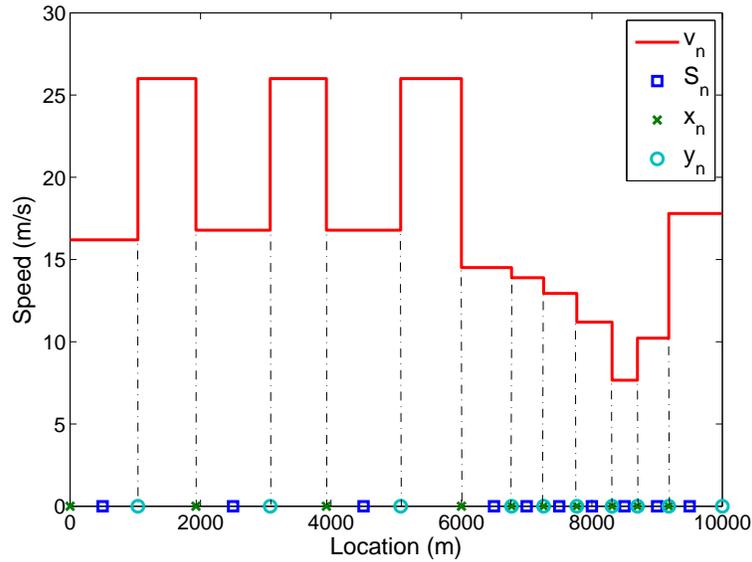}
\caption{Optimal solution $(x_n, y_n, v_n)$ for $N = 10$ sensors with $B_n = 3$ Mbits for all $n = 1, 2, \cdots, N$, and $E_1 = E_2 = E_3 = E_7 = E_9 = 1.0$ J, $E_4 = 1.2$ J, $E_5 = 3.2$ J, $E_6 = E_{10} = 2.0$ J, and $E_8 = 0.6$ J.} \label{fig:OptvsEv2}
\end{figure}

Next, we reset the amount of energy as $E_1 = E_2 = E_3 = E_7 = E_9 = 1.0$ J, $E_4 = 1.2$ J, $E_5 = 3.2$ J, $E_6 = E_{10} = 2.0$ J, and $E_8 = 0.6$ J and re-run the simulation, the result is shown in Fig.~\ref{fig:OptvsEv2}. It can be found that the UAV serves the first three sensors with medium speed, as the limited amount of energy cannot support the maximum speed. Similarly, as the energy in sensor $S_8$ is sufficient to support data collection during flying, the hovering mode is not necessary. Compared with Figs.~\ref{fig:OptvsB}-\ref{fig:OptvsBv2}, the energy constraint has similar impact as the data requirement.

\begin{figure}
\centering
\includegraphics[width=4.4in]{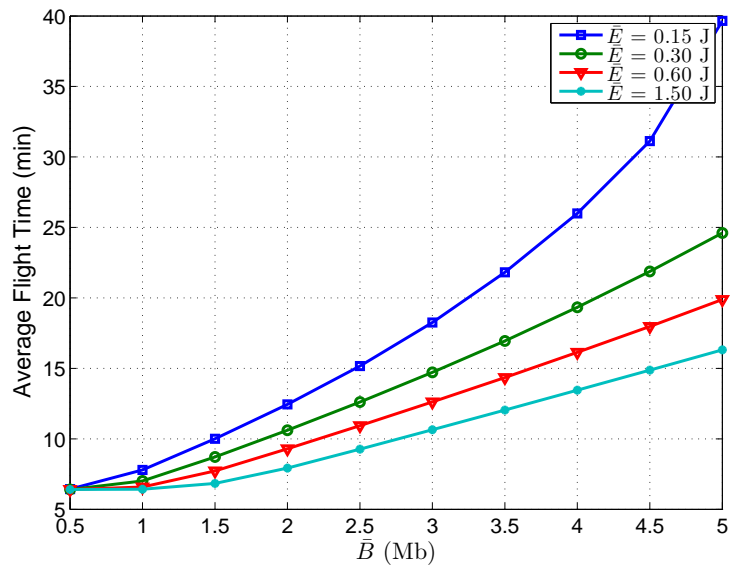}
\caption{Average flight time versus average amount of data with random data requirement, random energy and random locations.} \label{fig:AveTimevsB}
\end{figure}

\subsection{Average Performance Evaluation}
We further evaluate the average performance with random data requirement, random energy and random locations. The amount of data in each sensor follows uniform distribution with a mean value $\bar{B}$, the amount of energy in each sensor follows uniform distribution with a mean value $\bar{E}$, and each pair ($B_n, E_n$) is set to satisfy the feasibility constraint (\ref{cond:feas}). The sensors are uniformly distributed in the range $[S_0, S_{N+1}] = [0, 10000]$ m. The results are illustrated in Figs.~\ref{fig:AveTimevsB} and \ref{fig:AveTimevsE}. It can be seen in Fig.~\ref{fig:AveTimevsB} that when the average amount of energy is sufficient, the average flight time grows almost linearly with the increase of $\bar{B}$. But when the amount of energy is deficient (e.g., $\bar{E} = 0.15$ J), the average flight time grows exponentially with the increase of $\bar{B}$. This is due to the different relations between the flight time and the amount of data in hovering mode and flying mode. In energy sufficient case, the UAV collects data mainly in flying mode. While in energy constrained case, it collects data mainly in hovering mode.

\begin{figure}
\centering
\includegraphics[width=4.4in]{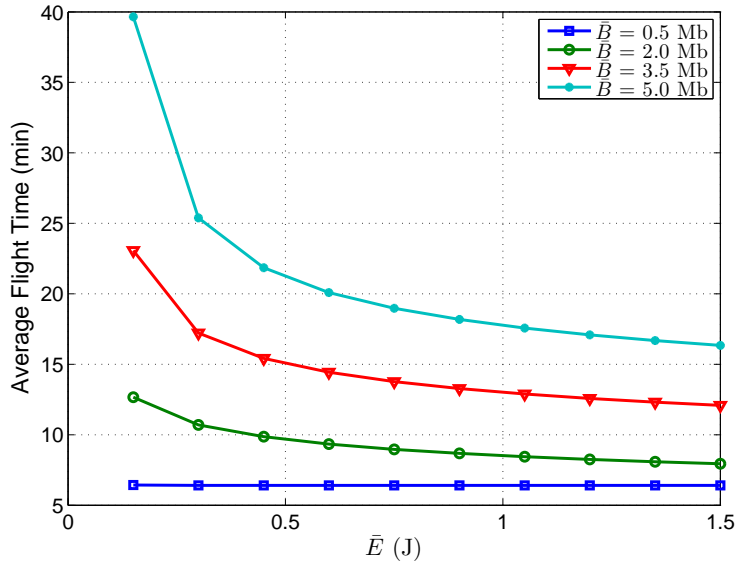}
\caption{Average flight time versus average amount of energy with random data requirement, random energy and random locations.} \label{fig:AveTimevsE}
\end{figure}

In Fig.~\ref{fig:AveTimevsE}, it is observed that when the amount of data is small, the average flight time is constant over all examined values of $\bar{E}$, which means that the UAV can fly with the maximum speed and collect data during flying. In addition, it is expected with the increase of $\bar{E}$, the curves converges to a fixed point with minimum flight time, i.e., the UAV flies with the maximum speed. However, the figure shows that the convergence is slow, especially for large values of $\bar{B}$. For the case with $\bar{B} = 5.0$ Mb, the curve firstly goes down exponentially, and then linearly with close-to-zero slope.

\begin{figure}
\centering
\includegraphics[width=4.4in]{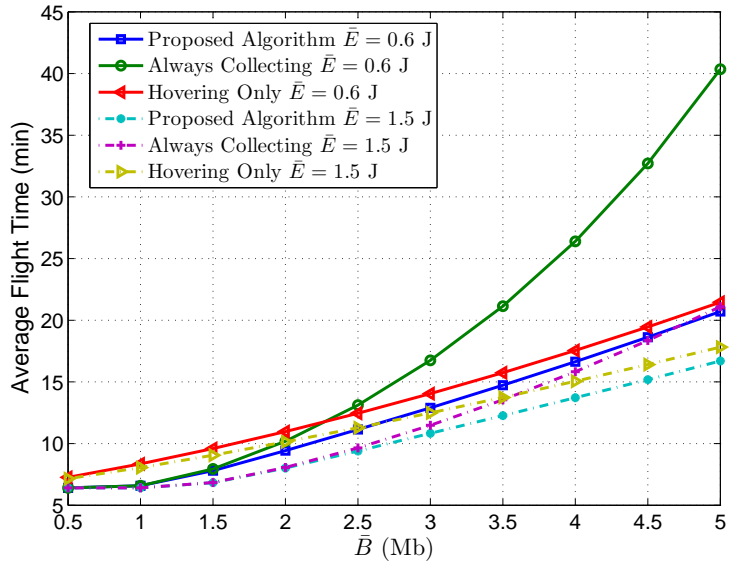}
\caption{Performance comparison with random data requirement, random energy and random locations.} \label{fig:AveTimeCMP}
\end{figure}

{Finally, we compare the proposed algorithm with the following two baselines: (1) \emph{hovering only} policy, i.e., the UAV flies with the maximum speed to the top of each sensor and hovers for data collection; (2) \emph{always collecting} algorithm developed based on \cite{lyu2016cyclical}, i.e., the whole range is divided by $S_0 = z_0 \le z_1 \le \cdots \le z_{N-1} \le z_N = S_{N+1}$, and the sensor $n$ uploads data during interval $[z_{n-1}, z_n]$ with the constant power $p_n(t) = v_nE_n/(z_n - z_{n-1})$. The transmission intervals are optimized in a way similar to our proposed DP algorithm. Different from the always collecting algorithm, our algorithm allows non-consecutive intervals so that the UAV can fly over poor channel regions with the maximum speed to save time. The result is shown in Fig.~\ref{fig:AveTimeCMP}. It can be seen that given an average energy, there is a performance gap between the proposed algorithm and the hovering only policy. Therefore, collecting data during flight can save some time even if the data rate is lower than hovering mode. Concerning the always collecting algorithm, it performs almost the same as the proposed algorithm when the average data amount is small, but the performance gap increases when the data amount becomes large. The reason is that in the always collecting algorithm, the average data rate is lower due to larger distance compared with the proposed algorithm. Such a low data rate can support a high speed for small data amount, but not for large data amount in which case the UAV has to fly with very low speed to guarantee successful data collection.
}

\section{Conclusion} \label{sec:concl}
In this paper, we have solved the flight time minimization problem for completing the data collection mission in a one-dimensional sensor network. The analysis on hovering mode provides the feasibility condition for a successful data collection. The analysis on the single-sensor case reveals the optimal solution structures. Firstly, the optimal power allocation follows the classical water-filling policy. Secondly, the maximum amount of data bits that can be successfully uploaded during UAV's flying is a decreasing function of the UAV's speed, which results in a simple bisection method to find the optimal speed. For the multi-sensor case, we have shown that the division of data collection intervals can be optimized via the DP algorithm. According to the numerical results, is has been observed that the behavior of the UAV relies on the locations, the data amount and the energy amount of sensors. With a sufficient amount of energy, the UAV can fly with maximum speed. Otherwise, its speed is proportional to the sensors' energy budgets and the inter-sensor distance, but inversely proportional to the amount of data to be uploaded.

{ One possible extension of this work would be sensors' visit order optimization when the sensors are deployed on a 2D space. Also, considering sensors' mobility and channel fading would be an interesting direction of future research.}

\appendices

\section{Proof of Lemma \ref{lemma:inc}} \label{proof:inc}
Since
\begin{align}
f''(x) = -\frac{a^2}{x(x+a)^2} < 0,
\end{align}
$f'(x)$ is decreasing. Therefore,
\begin{align}
f'(x) = \log_2 \Big( 1+ \frac{a}{x} \Big) - \frac{a}{x+a} > f'(+\infty) = 0,
\end{align}
which indicates that $f(x)$ is increasing. In addition,
\begin{align}
\lim_{x \rightarrow +\infty} f(x) = \lim_{x \rightarrow +\infty} a \log_2 \Big( 1+ \frac{a}{x} \Big)^{\frac{x}{a}} = a \log_2 e.
\end{align}
Hence, $f(x) < a \log_2 e = \frac{a}{\ln2}$.

\section{Proof of Proposition \ref{prop:feas}} \label{proof:feas}
As
\begin{align}
&\;\frac{T_{\mathrm h, n}(x_n)}{2} W\log_2 \bigg( 1+ \frac{\beta E_n}{T_{\mathrm h, n}(x_n) ((x_n - S_n)^2 + H^2)^{\frac{\alpha}{2}}}\bigg) \nonumber\\
<&\; \frac{W\beta E_n}{2((x_n - S_n)^2 + H^2)^{\frac{\alpha}{2}} \ln 2} \nonumber \\
\le&\; \frac{W\beta E_n}{2H^{\alpha} \ln 2},
\end{align}
where the first inequality holds according to Lemma \ref{lemma:inc}, and the gap can be arbitrarily small as $T_{\mathrm h, n}(x_n)$ tends to infinity. In the second inequality, the equality holds when $x_n = S_n$. Hence, if $B_n < \frac{W\beta E_n}{2H^{\alpha} \ln 2}$, there is always a feasible transmission mode so that $B_n$ bits can be successfully transmitted.

If $B_n \ge \frac{W\beta E_n}{2H^{\alpha} \ln 2}$ on the contrary, the left hand side of (\ref{eq:hover}) is always less than $B_n$. Therefore, the equation (\ref{eq:hovereq}) is not feasible.

\section{Proof of Theorem \ref{thm:pt}} \label{proof:pt}
The Lagrangian function of the problem (\ref{prob:thrmax2}) is expressed as
\begin{align}
\mathcal{L} = \frac{W}{2v} \int_{x}^{y} \log_2 \left( 1 + \frac{p(s)\beta}{(s^2 + H^2)^{\frac{\alpha}{2}}}\right) \mathrm d s - \lambda \bigg(\frac{1}{v} \int_{x}^{y} p(s) \mathrm d s - E \bigg).
\end{align}
By setting $\frac{\partial \mathcal L}{\partial p(s)} = 0$, we get the optimal power allocation expressed as (\ref{eq:ps}), where $\gamma_0 = \frac{1}{\lambda}$ is the water level so that (\ref{eq:powerconstr}) is satisfied with equality, and the channel gain is
\begin{align}
\gamma(s) = \frac{\beta}{(s^2 + H^2)^{\frac{\alpha}{2}}}, \label{eq:gammas}
\end{align}
which is equivalent to (\ref{eq:gammat}).

Based on (\ref{eq:ps}) and (\ref{eq:gammas}), it can be found that $p^*(s) > 0$ must hold in a continuous interval. Next, we derive the necessary and sufficient condition for $p^*(s) > 0$ for all $x < s < y$.

\emph{1) Sufficiency:}

Suppose that $p^*(s) > 0$ for $x < s < y$. As $s^2 \le \max\{x^2, y^2\}$ for any $x < s < y$, we have
\begin{align}
\gamma(s) = \frac{\beta}{(s^2 + H^2)^{\frac{\alpha}{2}}} > \frac{\beta}{(\max\{x^2, y^2\} + H^2)^{\frac{\alpha}{2}}}.
\end{align}

As $p^*(s) > 0$, we have $\frac{1}{\gamma_0} > \frac{1}{\gamma(s)}$ holds for all $x < s < y$. To guarantee this, $\frac{1}{\gamma_0}$ must be larger than or equal to the maximum value of $\frac{1}{\gamma(s)}$, i.e.,
\begin{align}
\frac{1}{\gamma_0} \ge \frac{(\max\{x^2, y^2\} + H^2)^{\frac{\alpha}{2}}}{\beta}. \label{eq:gamma1}
\end{align}

On the other hand, according to (\ref{eq:powerconstr}), i.e.,
\begin{align}
\frac{1}{v} \int_{x}^{y} p^*(s) \mathrm d s &= \frac{1}{v} \int_{x}^{y} \bigg(\frac{1}{\gamma_0} - \frac{(s^2 + H^2)^{\frac{\alpha}{2}}}{\beta} \bigg) \mathrm d s \nonumber\\
&= \frac{y-x}{v} \frac{1}{\gamma_0} - \frac{1}{v} \int_{x}^{y} \frac{(s^2 + H^2)^{\frac{\alpha}{2}}}{\beta} \mathrm d s \nonumber\\
&\le E, \label{eq:vlb}
\end{align}
we have
\begin{align}
\frac{1}{\gamma_0} \le \frac{vE}{y-x} + \frac{1}{y-x} \int_{x}^{y} \frac{(s^2 + H^2)^{\frac{\alpha}{2}}}{\beta} \mathrm d s. \label{eq:gamma2}
\end{align}
According to (\ref{eq:gamma1}) and (\ref{eq:gamma2}), we have
\begin{align}
\frac{(\max\{x^2, y^2\} + H^2)^{\frac{\alpha}{2}}}{\beta} \le \frac{vE}{y-x} + \frac{1}{y-x} \int_{x}^{y} \frac{(s^2 + H^2)^{\frac{\alpha}{2}}}{\beta} \mathrm d s, \label{eq:cond2}
\end{align}
which is equivalent to (\ref{eq:cond}). Therefore, the sufficiency is proved.

In addition, to maximize the throughput, (\ref{eq:vlb}) must be satisfied with equality, which results in equality condition in (\ref{eq:gamma2}). Hence, (\ref{eq:gamma0}) is obtained.

\emph{2) Necessity:}

Suppose (\ref{eq:cond}) (or equivalently (\ref{eq:cond2})) holds true. We let $\frac{1}{\gamma_0}$ equals to the right hand side of (\ref{eq:gamma2}). Then, (\ref{eq:vlb}) is satisfied with equality, which guarantees that the power allocation is optimal as all the energy budget is fully utilized. Based on (\ref{eq:cond2}) and the equality of (\ref{eq:gamma2}), we have
\begin{align}
\frac{1}{\gamma_0} \ge \frac{(\max\{x^2, y^2\} + H^2)^{\frac{\alpha}{2}}}{\beta} > \frac{(s^2 + H^2)^{\frac{\alpha}{2}}}{\beta}
\end{align}
for all $x < s < y$. Therefore, we have $p^*(s) = \frac{1}{\gamma_0} - \frac{(s^2 + H^2)^{\frac{\alpha}{2}}}{\beta} > 0$ for $x < s < y$, and hence, the necessity is proved.

The optimal throughput can be obtained by substituting $p(s)$ in (\ref{eq:thrmaxobj}) with (\ref{eq:ps}) and deducing as follows
\begin{align}
B_{\max}(x,y,v) &= \frac{W}{2v}\int_x^y \log_2 \frac{\beta}{\gamma_0(s^2+H^2)^{\frac{\alpha}{2}}} \mathrm d s \nonumber\\
& = \frac{W}{2v} \bigg( s \log_2 \frac{\beta}{\gamma_0} \bigg|_x^y - \frac{\alpha}{2} \int_x^y \log_2 (s^2+H^2) \mathrm d s \bigg) \nonumber\\
& = \frac{W}{2v} \bigg( \Big(s \log_2 \frac{\beta}{\gamma_0} - \frac{\alpha}{2} s\log_2 (s^2+H^2) \Big)\bigg|_x^y + \frac{\alpha}{2} \int_x^y s \mathrm d \big( \log_2 (s^2+H^2) \big)\bigg) \nonumber\\
& = \frac{W}{2v} \bigg( s \log_2 \frac{\beta}{\gamma_0(s^2+H^2)^{\frac{\alpha}{2}}} \bigg|_x^y + \frac{\alpha}{\ln 2} \int_x^y \frac{s^2}{s^2+H^2} \mathrm d s \bigg)\nonumber\\
& = \frac{W}{2v} \bigg( s \log_2 \frac{\beta}{\gamma_0(s^2+H^2)^{\frac{\alpha}{2}}} \bigg|_x^y + \frac{\alpha}{\ln 2} \int_x^y \Big(1-\frac{H^2}{s^2+H^2} \Big) \mathrm d s \nonumber\\
& = \frac{W}{2v} \bigg(s \log_2\frac{\beta}{\gamma_0(s^2 + H^2)^{\frac{\alpha}{2}}} + \frac{\alpha s}{\ln2} - \frac{\alpha H}{\ln2} \arctan\frac{s}{H} \bigg) \bigg|_x^y. \label{eq:Bmax}
\end{align}

\section{Proof of Theorem \ref{prop:decv}} \label{proof:decv}
Based on (\ref{eq:thrmaxobj}) and (\ref{eq:powerconstr}), to achieve the maximum throughput, all the energy should be fully used, i.e.
\begin{align}
\int_{x}^{y} p(s) \mathrm d s = vE.
\end{align}
Replacing $p(s)$ in the above equation by (\ref{eq:ps}), we have
\begin{align}
\frac{1}{\gamma_0} = \frac{vE}{y-x} + \frac{1}{y-x} \int_{x}^{y} \frac{1}{\gamma(s)} \mathrm d s.
\end{align}
According to the first line of (\ref{eq:Bmax}), we have
\begin{align}
B_{\max}(x,y,v) &= \frac{W}{2v}\int_x^y \log_2 \frac{\gamma(s)}{\gamma_0} \mathrm d s \nonumber\\
&= \frac{W}{2v} \left( \int_x^y \log_2 \frac{1}{\gamma_0} \mathrm d s - \int_x^y \log_2 \frac{1}{\gamma(s)} \mathrm d s\right) \nonumber\\
%&= \frac{1}{v} \left( (y-x) \log_2 \bigg(\frac{1}{y-x} \Big( {vE} + \int_{x}^{y} \frac{1}{\gamma(s)} \mathrm d s\Big)\bigg) - \int_x^y \log_2 \frac{1}{\gamma(s)} \mathrm d s\right) \nonumber\\
& = \frac{W}{2v} \left( a_1 \log_2 \bigg(\frac{1}{a_1} \Big( {vE} + a_2\Big)\bigg) - a_3\right),
\end{align}
where
\begin{align}
a_1 &= y-x, \\
a_2 &= \int_{x}^{y} \frac{1}{\gamma(s)} \mathrm d s,\\
a_3 &= \int_x^y \log_2 \frac{1}{\gamma(s)} \mathrm d s.
\end{align}

Define a function
\begin{align}
g(u) = u \left( a_1 \log_2 \bigg(\frac{1}{a_1} \Big( \frac{E}{u} + a_2\Big)\bigg) - a_3\right).
\end{align}
Since
\begin{align}
g''(u) = -\frac{a_1E^2}{u(E+a_2u)^2} < 0
\end{align}
for all $u>0$, $g'(u)$ is a decreasing function of $u$. Therefore,
\begin{align}
g'(u) &= a_1 \log_2 \bigg(\frac{1}{a_1} \Big( \frac{E}{u} + a_2\Big)\bigg) - a_3 - \frac{a_1E}{E+a_2u}\nonumber\\
&> g'(+\infty) \nonumber\\
&= a_1 \log_2 \bigg(\frac{a_2}{a_1}\bigg) - a_3 \nonumber\\
&= (y-x) \log_2 \bigg(\frac{1}{y-x}\int_{x}^{y} \frac{1}{\gamma(s)} \mathrm d s\bigg) - \int_x^y \log_2 \frac{1}{\gamma(s)} \mathrm d s \nonumber\\
&\ge 0, \label{eq:gprime}
\end{align}
where the first inequality holds due to the monotonicity of $g'(u)$, and the second inequality holds due to the concavity of $\log$ function. Based on (\ref{eq:gprime}), we conclude that $g(u)$ is an increasing function of $u$. Since $B_{\max}(x,y,v) = \frac{1}{2}W g(\frac{1}{v})$, it is a decreasing function of $v$.

\section{Proof of Proposition \ref{prop:complex}} \label{proof:complex}
As $(x_n^*, y_n^*)$ is the optimal solution for the minimization problem in (\ref{eq:DPn}), we have
\begin{align}
J_n(s_n) &= \min_{s_n \le x_n \le y_n \le S_{N+1}} \{g_n(x_n, y_n) + J_{n+1}(y_n)\} \nonumber\\
&= g_n(x_n^*, y_n^*) + J_{n+1}(y_n^*).
\end{align}

For a given $s_n' \in [s_n, x_n^*]$, as $s_n' \ge s_n$, we have $[s_n', S_{N+1}] \subseteq [s_n, S_{N+1}]$. Therefore,
\begin{align}
J_n(s_n') &= \min_{s_n' \le x_n \le y_n \le S_{N+1}} \{g_n(x_n, y_n) + J_{n+1}(y_n)\} \nonumber\\
&\ge \min_{s_n \le x_n \le y_n \le S_{N+1}} \{g_n(x_n, y_n) + J_{n+1}(y_n)\} = J_n(s_n).  \label{eq:Tsub}
\end{align}

Secondly, as $s_n' \le x_n^*$, we have $s_n' \le x_n^* \le y_n^* \le S_{N+1}$. Hence,
\begin{align}
J_n(s_n') &= \min_{s_n' \le x_n \le y_n \le S_{N+1}} \{g_n(x_n, y_n) + J_{n+1}(y_n)\} \nonumber\\
&\le g_n(x_n^*, y_n^*) + J_{n+1}(y_n^*) = J_n(s_n).  \label{eq:Tsub2}
\end{align}

Combining (\ref{eq:Tsub}) and (\ref{eq:Tsub2}), we prove that $J_n(s_n') = J_n(s_n)$ for all $s_n' \in [s_n, x_n^*]$.

\bibliographystyle{IEEEtran}
\bibliography{ref}

% that's all folks
\end{document}